\newcommand{\R}{\mathbb{R}} 
\newcommand{\A}{\mathcal{A}}
\newcommand{\M}{\mathcal{M}}
\newcommand{\U}{\mathcal{U}}
\newcommand{\I}{\mathcal{I}}
\newcommand{\F}{\mathcal{F}}
\newcommand{\abs}[1]{\left\lvert#1\right\rvert}
\declaretheorem[name=Theorem]{theorem}
\declaretheorem[sibling=theorem,name=Lemma]{lemma}
\declaretheorem[sibling=theorem,name=Proposition]{proposition}
\declaretheorem[sibling=theorem,name=Definition,style=definition]{definition}
\declaretheorem[sibling=theorem,name=Remark,style=definition]{remark}
\declaretheorem[sibling=theorem,name=Example,style=definition]{example}
\begin{document}
\title{ Extension of Simple Algorithms to the Matroid Secretary Problem }
\author{Simon Park}
\date{\today}
\maketitle
\begin{abstract}
Whereas there are simple algorithms that are proven to be optimal for the Classical and the Multiple Choice Secretary Problem, the Matroid Secretary Problem is less thoroughly understood. This paper proposes the generalization of some simple algorithms from the Classical and Multiple Choice versions on the Matroid Secretary Problem. Out of two algorithms that make decisions based on samples, like the Dynkin's algorithm, one is proven to be an instance of Greedy Algorithm \cite{BBSW2022}, while the other is not. A generalized version of the Virtual Algorithm \cite{BIKK2007} obtains a constant competitive ratio for the Hat Graph, the adversarial example for Greedy Algorithms, but fails to do so when a slight modificiation is introduced to the graph. We show that there is no algorithm with Strong Forbidden Sets \cite{STV2021} of size 1 on all graphic matroids. 

\begin{align*}
    &\text{I pledge my honor that this paper represents my own work} \\
    &\text{in accordance with University regulations. /s Juhyun 'Simon' Park}
\end{align*}
\end{abstract}
\tableofcontents
\section{Introduction}
The Secretary Problem is a classical problem in online algorithms and optimal stopping theory. We start the paper by introducing the problem setting of the original version of the Secretary Problem and its variants.
\subsection{Classical Secretary Problem}
\label{section:CSP}
Consider the following online problem. There is an underlying set (i.e., the universe) $\U$, and each element $u \in \U$ has an associated value $v(u) \in \R_{\geq 0}$ that is \textit{unknown} to the algorithm $\A$ in the beginning. The values $v(u)$ will be revealed to $\A$ one by one in some order, and each time the algorithm receives a new value, it needs to irrevocably decide whether or not to terminate with that value as the final output. The goal of the algorithm is to select the optimal element $u^* \in \U$ that maximizes $v(u^*)$. 

Although the origin of the problem is not clearly known, the name of the problem derives from one of its formulation that was widespread: applicants for a secretary position are being interviewed one by one, and the goal of the interviewer is to accept the one with the highest competency value. 

The optimal solution, known as \emph{Dynkin's algorithm}, is folklore: if we set $\abs{\U} = n$, first sample the first $\lfloor \frac{n}{e} \rfloor$ elements without accepting them. Then use the highest value of $v(u)$ among the samples as a threshold for the remaining elements and accept the first one that has greater value than the threshold. When $n$ goes to infinity, the probability that the algorithm chooses the maximum weight element converges to $\frac{1}{e}$, which is proven to be optimal. \cite{Dynkin}
\subsection{Multiple Choice Secretary Problem}
\label{section:MCSP}
Kleinberg \cite{Kleinberg} proposed a natural extension of the secretary problem, where the algorithm can select up to $k$ elements instead of just one. In this case, the goal of the algorithm is to approximate the sum of the values of the elements it chooses as closely as possible to the sum of the values of the top $k$ elements $u_1^*, \cdots, u_k^* \in \U$ (i.e., the elements with the highest values). Babaioff et al. \cite{BIKK2007}  proposed two algorithms that select each of $u_i^*$ with probability $\frac{1}{e}$, which is a stronger condition than just approximating the sum of the values. We present these algorithms below.

\subsubsection{Optimistic Algorithm}
Enumerate the elements of the universe $\U$ as $u_1, \cdots, u_n$ in the order in which they are presented to the algorithm. The optimistic algorithm starts by sampling without selecting the first $t = \lfloor \frac{n}{e} \rfloor$ elements $u_1, \cdots, u_t$. Out of these elements, the algorithm stores the $\min(t, k)$ heaviest elements in a reference set $R = \{ r_1, \cdots, r_{\abs{R}} \}$ in decreasing order of their values. That is, $v(r_1) > \cdots > v(r_{\abs{R}})$. After the sampling phase is over, for each $i > t$, element $u_i$ is accepted if and only if $v(u_i) > u(r_{\abs{R}})$. When an element is accepted, the lowest valued reference element $r_{\abs{R}}$ will be removed from $R$. No item is ever added to $R$ after its initial construction.

To sum up, the algorithm initially uses the $k$-th heaviest element of the samples as the threshold. Once an element is accepted, the threshold moves up to the value of the next heaviest element of the samples. In general, it uses the $(k - i)$-th heaviest element of the samples, where $i$ is the number of accepted elements so far.

\subsubsection{Virtual Algorithm}
\label{algo:Virtual}
During the sampling phase, the virtual algorithm performs the same as the optimistic algorithm and constructs the same reference set $R$. But for each $i > t$, it selects $u_i$ if and only if both 
\begin{enumerate}
    \item $v(u_i) > v(r_{\abs{R}})$
    \item $r_{\abs{R}}$ was one of the samples
\end{enumerate}
Regardless of whether or not $u_i$ was selected, it will be added to $R$ and replace $r_{\abs{R}}$ if $v(u_i) > v(r_{\abs{R}})$. The key idea of the virtual algorithm is to always use the value of the $k$-th best element seen so far as the threshold. But unlike the optimistic algorithm, where a new element with a greater value than the threshold is automatically selected, the virtual algorithm additionally checks if that threshold was set from an element from the samples. Although the algorithm may not sound intuitive, the analysis of the algorithm is far simpler than that of the optimistic algorithm. 
\subsection{Matroid Secretary Problem}
Babaioff et al. \cite{BIK2007} further generalized the Multiple Choice Secretary Problem to the Matroid Secretary Problem (MSP). In a MSP, the universe $\U$ is replaced by a matroid $\M = (\U, \I)$, and instead of accepting any set of elements of a fixed size $k$, an algorithm is required to accept an independent set $I \in \I$. In fact, the Multiple Choice Secretary Problem is a special case of MSP where the underlying matroid is a $k$-uniform matroid. 

In the last decade, algorithms with constant competitive ratios have been found for certain types of matroids (e.g., transversal \cite{KRTV2013}, graphic \cite{KP2009}, laminal \cite{MTW2016}), but it is still an open question whether there exists such an algorithm for a general matroid. \cite{BIKK2018} provides a comprehensive summary of the state-of-the-art algorithms for different types of matorids.

One notable negative result is presented in \cite{BBSW2022}. Bahrani et al. showed that a certain class of algorithms that are ``greedy-like'' cannot achieve a constant competitive ratio on general matroids. 
\subsection{Overview of the Paper}
The paper starts by introducing basic definitions and preliminary results from matroid theory. We then present the frameworks for the Matroid Secretary Problem from previous works, which will be adopted for the purpose of this paper. Using these frameworks, we analyze the properties of three algorithms that are generalized from simple algorithms for the Classical and the Multiple Choice Secretary Problem.
\section{Preliminaries}
In this section, we review some theories that will be useful for the purpose of this paper.
\subsection{Matroid Theory}
\begin{definition}
Given a \emph{universe} or \emph{ground set} $\U$, the tuple $\M = (\U, \I)$ where $\I \subset \mathcal{P}(\U)$ is called a \emph{matroid} if it satisfies the following three properties:
\begin{enumerate}
    \item \textbf{Non-trivial:} $\emptyset \in \I$
    \item \textbf{Downward-closed:} If $S \in \I$ and $T \subset S$, then $T \in \I$
    \item \textbf{Augmentation:} If $S, T \in \I$ and $\abs{S} > \abs{T}$, then $\exists \, i \in S \setminus T$ such that $T \cup \{ i \} \in \I$
\end{enumerate}
Any $I \in \I$ is called an \emph{independent set} and any $I \in \mathcal{P}(\U) \setminus \I$ is called a \emph{dependent set} of $\M$.
\end{definition}

\begin{example}
Given a set $\U$ and $\I = \{ S \subset \U \; \vert \; \abs{S} \leq k \}$, the tuple $\M = (\U, \I)$ is a matroid. A matroid of this type is known as a \emph{$k$-uniform matroid}.
\end{example}

\begin{example}
Given an undirected graph $G = (V, E)$ and the set $\I = \{ E' \subset E \; \vert \; E' \text{ is acyclic} \}$, the tuple $\M = (E, \I)$ is a matroid. A matroid of this type is known as a \emph{graphic matroid}.
\end{example}

\begin{definition}
Given a matroid $\M = (\U, \I)$ and a set $S \subset \U$, the \emph{rank} of $S$ is defined as $rank(S) = \underset{I \in \I, I \subset S}{\max} \abs{I}$.
\end{definition}

\begin{definition}
Given a matroid $\M = (\U, \I)$ and a set $S \subset \U$, the \emph{span} of $S$ is defined as $span(S) = \{ u \in \U \; \vert \; rank(S \cup \{ u \}) = rank(S) \}$.
\end{definition}

\begin{definition}
Given a matroid $\M = (\U, \I)$, a set $B \in \I$ is called a \emph{basis} of $\M$ if there exists no independent set $B'$ such that $B' \supsetneq B$, or equivalently if $span(B) = \U$
\end{definition}

\begin{definition}
Given a matroid $\M = (\U, \I)$ and a set $S \subset \U$, the \emph{restriction} of $\M$ on $S$ is defined as $\M\vert_S = (S, \I\vert_S)$ where $\I\vert_S = \{ I \in \I \; \vert \; I \subset S \} $
\end{definition}

\begin{definition}
Given a matroid $\M = (\U, \I)$ and a set $S \in \I$, the \emph{contraction} of $\M$ by $S$ is defined as $\M / S = (\U \setminus S, \I / S)$ where $\I / S = \{ T \subset \U \setminus S \; \vert \; (S \cup T) \in \I \} $
\end{definition}

\begin{definition}
Given a matroid $\M = (\U, \I)$ and a weight function $v: \U \rightarrow \R_{\geq 0}$, the \emph{max-weight basis} of $\M$ is defined as
\begin{equation*}
MWB(\M) = \underset{I \in \I}{\arg \max} \; v(I)
\end{equation*}
where we abuse notation and denote $v(I) = \sum_{u \in I} v(u)$ for $I \subset \U$. Additionally, if $\I$ can be understood unambiguously from the context, we abuse the notation and denote $MWB(\M)$ as $MWB(\U)$. This notation is most commonly used to denote the max-weight basis $ MWB(\M\vert_S)$ of a subset $S \subset \U$ as $MWB(S)$
\end{definition}

\begin{lemma}
\label{lemma:GreedyAlgo}
Given a matroid $\M = (\U, \I)$, the following properties are well-known to be true and will be presented without proof
\begin{enumerate}
    \item $B$ is a basis of $\M$ if and only if $\abs{B} = rank(\U)$
    \item If $S, T \subset \U$, then $rank(S) + rank(T) \geq rank(S \cup T) + rank(S \cap T)$
    \item If $S \subset \U$ and $u \in \U$, then $rank(S) \leq rank(S \cup \{ u \}) \leq rank(S) + 1$
    \item If $S \subset T \subset \U$, then $span(S) \subset span(T)$
    \item The greedy algorithm \ref{algo:Greedy} finds the max-weight basis of $\M$
    \item The greedy algorithm \ref{algo:Greedy} accepts an element $u_i$ if and only if $u_i \not \in span(\{ u_1, \cdots, u_{i - 1} \})$
\end{enumerate}
\end{lemma}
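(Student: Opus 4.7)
The plan is to establish the six properties in order, with parts (2)--(4) forming a cluster around rank submodularity and parts (5)--(6) relying on the standard greedy exchange argument.

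First I would prove (1) directly from the augmentation axiom: if $B, B' \in \I$ are both maximal independent sets with $\abs{B} < \abs{B'}$, augmentation supplies an element of $B' \setminus B$ that can be added to $B$ while preserving independence, contradicting the maximality of $B$; hence all bases share a common size, which must equal $rank(\U)$, and conversely any independent set of size $rank(\U)$ is automatically maximal. For (2), I would pick a basis $I$ of $S \cap T$ and, by iterated augmentation inside $S \cup T$, extend it to a basis $J$ of $S \cup T$. Since $J \cap S$ and $J \cap T$ are independent subsets of $S$ and $T$ respectively and $I \subseteq J \cap (S \cap T)$, the chain $rank(S) + rank(T) \geq \abs{J \cap S} + \abs{J \cap T} = \abs{J \cap (S \cup T)} + \abs{J \cap (S \cap T)} \geq rank(S \cup T) + rank(S \cap T)$ closes the argument.

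Parts (3) and (4) then fall out of (2). For (3), monotonicity is immediate since any independent subset of $S$ is also an independent subset of $S \cup \{u\}$, and the upper bound follows by applying (2) with $T = \{u\}$, using $rank(\{u\}) \leq 1$ and $rank(\emptyset) = 0$ (the case $u \in S$ being trivial). For (4), given $u \in span(S)$ so that $rank(S \cup \{u\}) = rank(S)$, applying (2) to $S \cup \{u\}$ and $T$ (with the subcase $u \in T$ handled separately) forces $rank(T) \geq rank(T \cup \{u\})$, which combined with (3) yields equality and hence $u \in span(T)$.

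For (5), I would run the standard exchange argument on the greedy output $G = \{g_1, \ldots, g_r\}$ and a fixed max-weight basis $O = \{o_1, \ldots, o_r\}$, both sorted in decreasing weight, proving $v(g_i) \geq v(o_i)$ for each $i$: if instead $v(o_i) > v(g_i)$, then augmentation applied to $\{g_1, \ldots, g_{i-1}\}$ and $\{o_1, \ldots, o_i\}$ produces an independent extension of weight exceeding $v(g_i)$ that greedy should have accepted earlier, contradicting the choice of $g_i$. For (6), let $A_{i-1}$ denote the set of accepted elements after step $i-1$; I would show by induction that $span(A_{i-1}) = span(\{u_1, \ldots, u_{i-1}\})$, the inductive step using (4) together with the observation that any rejected $u_j$ satisfies $A_{j-1} \cup \{u_j\} \notin \I$, i.e.\ $u_j \in span(A_{j-1})$, so appending it does not enlarge the span. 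Then greedy accepts $u_i$ iff $A_{i-1} \cup \{u_i\} \in \I$ iff $u_i \notin span(A_{i-1}) = span(\{u_1, \ldots, u_{i-1}\})$.

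The hard part will be (2): the subtlety is ensuring the extension of $I$ to a basis of $S \cup T$ stays inside $S \cup T$, which requires iteratively applying augmentation against independent sets drawn from $S \cup T$ rather than from $\U$ at large. Once submodularity is in hand, (3) and (4) become one-line corollaries and (5)--(6) reduce to well-known exchange arguments, so the real work is concentrated in this single step.
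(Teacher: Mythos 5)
The paper states Lemma~\ref{lemma:GreedyAlgo} explicitly \emph{without} proof, treating all six items as well-known facts from matroid theory, so there is no in-paper argument to compare against; your job was to reconstruct the folklore, and you have done so along the standard textbook lines. Items (1)--(4) are handled correctly: (1) is the usual equicardinality-of-bases argument via augmentation, (2) is the classical proof of rank submodularity (extend a basis $I$ of $S\cap T$ to a basis $J$ of $S\cup T$ using only elements of $S\cup T$, then count $J\cap S$ and $J\cap T$), and (3)--(4) do fall out of (2) plus monotonicity as you describe. For (5) and (6) the sketches capture the right ideas, but two small points deserve tightening if you were to write them in full: in (5), before comparing $G$ and $O$ coordinatewise you should first note that the greedy output is itself a basis (once $A_{j-1}\cup\{u_j\}$ is dependent it remains dependent under any superset of $A_{j-1}$ by downward-closedness, so the final $A_n$ is maximal), which is what licenses writing both as length-$r$ lists; and in (6), passing from $span(A_{i-1}) = span(\{u_1,\dots,u_{i-1}\})$ to $span(A_i) = span(\{u_1,\dots,u_i\})$ implicitly uses that adjoining an already-spanned element does not change the span (equivalently, idempotence of span), a fact worth stating since it is not literally among (1)--(4). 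Neither is a conceptual gap --- both are one-line fixes --- so the overall plan is sound and matches the standard development these folklore results are drawn from.
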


\begin{algorithm}\caption{Greedy Algorithm for Obtaining a Max-Weight Basis}\label{algo:Greedy}

\begin{flushleft}
Enumerate the elements of $\U = \{ u_1, \cdots, u_n \}$ in decreasing order of their weights.
\end{flushleft}
\begin{algorithmic}
    \STATE Initialize $I \gets \emptyset$
    \FOR{$i = 1$ \TO $n$} 
        \IF{$I \cup \{ u_i \} \in \I$}
            \STATE $I \gets I \cup \{ u_i \}$
        \ENDIF
    \ENDFOR
\end{algorithmic}
\end{algorithm}
\subsection{Observations about Matroids}
Using some of the statements in Lemma \ref{lemma:GreedyAlgo}, we will prove  properties of a max-weight basis that will be particularly useful for the analysis of algorithms in MSP in the sections to come. Lemma \ref{lemma:SpanTheSame} says that if one element is spanned by another element, then they span the same set of elements. Lemma \ref{lemma:MWBInSubset} says that the `significance'' of an element is downward-closed in some sense. Any element that is ``significant'' to be included in the max-weight basis for a larger set is necessarily included in the max-weight basis for any subset it appears in. In other words, if we know that an element is not in the max-weight basis of a smaller set, we know that it will not be in the max-weight basis of a larger set. Lemma \ref{lemma:MWBKickOutOne} says that adding one additional element to a set can remove up to one element from the max-weight basis of the original set. Proposition \ref{prop:MWBisMWBofMWB} generalizes the idea that the max-weight basis of a max-weight basis is a max-weight basis. 

\begin{lemma}
\label{lemma:SpanTheSame}
For an independent set $I \in \I$ and $u, u' \in \U \setminus span(I)$ in a matroid $\M = (\U, \I)$, if $u \in span(I \cup \{ u' \} )$, then $span(I \cup \{ u \}) = span(I \cup \{ u' \})$
\end{lemma}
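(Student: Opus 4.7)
The plan is to work entirely with rank, using parts (2), (3), and (4) of Lemma \ref{lemma:GreedyAlgo}. The key preliminary observation is that since neither $u$ nor $u'$ is in $\mathrm{span}(I)$, both $I \cup \{u\}$ and $I \cup \{u'\}$ are independent, so $\mathrm{rank}(I \cup \{u\}) = \mathrm{rank}(I \cup \{u'\}) = |I|+1$.

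First I would establish the symmetric form of the hypothesis: that $u' \in \mathrm{span}(I \cup \{u\})$. By hypothesis, $u \in \mathrm{span}(I \cup \{u'\})$, which means $\mathrm{rank}(I \cup \{u, u'\}) = \mathrm{rank}(I \cup \{u'\}) = |I|+1$. Since $\mathrm{rank}(I \cup \{u\}) = |I|+1$ as well, adding $u'$ to $I \cup \{u\}$ also does not increase the rank, so $u' \in \mathrm{span}(I \cup \{u\})$.

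Next I would prove one containment, say $\mathrm{span}(I \cup \{u\}) \subset \mathrm{span}(I \cup \{u'\})$. Take any $w \in \mathrm{span}(I \cup \{u\})$, so $\mathrm{rank}(I \cup \{u, w\}) = |I|+1$. By monotonicity of span (part (4) of Lemma \ref{lemma:GreedyAlgo}), $u' \in \mathrm{span}(I \cup \{u\}) \subset \mathrm{span}(I \cup \{u, w\})$, so adjoining $u'$ does not raise the rank:
\begin{equation*}
\mathrm{rank}(I \cup \{u, u', w\}) = \mathrm{rank}(I \cup \{u, w\}) = |I| + 1.
\end{equation*}
Restricting to the subset $I \cup \{u', w\}$ and using part (3) bounds its rank above by $|I|+1$; combined with the lower bound $\mathrm{rank}(I \cup \{u', w\}) \geq \mathrm{rank}(I \cup \{u'\}) = |I|+1$, we conclude $\mathrm{rank}(I \cup \{u', w\}) = |I|+1$, so $w \in \mathrm{span}(I \cup \{u'\})$. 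The reverse containment follows by repeating the argument with the roles of $u$ and $u'$ swapped, which is legitimate precisely because the first step established the symmetry.

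The only conceptual hurdle is recognizing that the hypothesis $u \in \mathrm{span}(I \cup \{u'\})$ is actually symmetric in $u$ and $u'$ once one uses independence of $I \cup \{u\}$; everything else is bookkeeping with rank inequalities. I do not expect to need the submodularity inequality (part (2)) — parts (3) and (4) suffice.
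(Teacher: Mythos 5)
Your proof is correct, and it takes a genuinely different route from the paper's. The paper's proof works directly with the augmentation axiom: assuming $v \notin \mathrm{span}(I \cup \{u\})$ gives an independent set $I \cup \{u, v\}$ of size $|I|+2$, and augmenting the independent set $I \cup \{u'\}$ from it must add $v$ (not $u$, since $u \in \mathrm{span}(I \cup \{u'\})$ makes $I \cup \{u', u\}$ dependent), so $I \cup \{u', v\}$ is independent; the reverse containment is argued symmetrically. Your proof instead replaces all talk of independence and augmentation with rank accounting: $u \in \mathrm{span}(I \cup \{u'\})$ becomes $\mathrm{rank}(I \cup \{u, u'\}) = |I|+1$, and every subsequent step is a one-sided inequality from monotonicity of rank (part (3)) or monotonicity of span (part (4)). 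Both are valid. The paper's version is shorter but hides a small case analysis (ruling out $u$ as the element augmentation hands you); your version is longer but each step is a mechanical inequality with no choices to justify, and it makes the symmetry of the hypothesis --- that $u' \in \mathrm{span}(I \cup \{u\})$ too --- an explicit and reusable intermediate fact rather than something used implicitly. Your expectation that submodularity (part (2)) is unnecessary is confirmed; neither proof needs it.
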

\begin{proof}
Assume $v \not \in span(I \cup \{ u \})$. That is, $I \cup \{ u, v \}$ is independent. Since $I \cup \{ u' \}$ is independent, the augmentation property forces $I \cup \{ u', v \}$ to be independent. Hence, $v \not \in span(I \cup \{ u' \})$. Similarly, if $v \not \in span(I \cup \{ u' \})$, then since $I \cup \{ u \}$ is independent, we need $I \cup \{ u, v \}$ to be independent. Hence, $v \not \in span(I \cup \{ u \})$. 
\end{proof}

\begin{lemma}[Modified from \cite{STV2021}]
\label{lemma:MWBInSubset}
For any set $T \subset S \subset \U$ in a matroid $\M = (\U, \I)$ and any weight function $v: \U \rightarrow \R_{\geq 0}$, we have 
\begin{equation*}
    MWB(S) \cap T \subset MWB(T)
\end{equation*}
In particular, if $T$ contains only the top elements of $S$; that is, if $T$ can be written as $\{ u \in S : v(u) \geq c \}$ for some constant $c$, then
\begin{equation*}
    MWB(S) \cap T = MWB(T)
\end{equation*}
\end{lemma}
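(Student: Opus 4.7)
My plan is to prove both inclusions using the greedy algorithm characterization from Lemma \ref{lemma:GreedyAlgo}, in particular item (6) which says greedy picks $u_i$ iff $u_i \notin span(\{u_1, \ldots, u_{i-1}\})$, together with item (5) that greedy produces $MWB$. Fix a consistent tiebreaking rule, so that running greedy on any subset produces a well-defined max-weight basis. For any real number $c'$, write $S_{\geq c'} = \{s \in S : v(s) \geq c'\}$ and similarly for $T$, and let $A_{c'} = MWB(S_{\geq c'})$ and $B_{c'} = MWB(T_{\geq c'})$.

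For the first inclusion $MWB(S) \cap T \subset MWB(T)$, I would induct on the elements in decreasing order of weight (equivalently, on decreasing $c'$), with inductive hypothesis $A_{c'} \cap T \subset B_{c'}$. The base case is vacuous. For the inductive step, consider the next element $u$ with weight $c_0$. If $u \notin T$ the claim is immediate since $B_{c_0} = B_{c_1}$ where $c_1$ is the previous threshold, and $u$ contributes nothing to $A \cap T$. The substantive case is $u \in T$: I need to show that if greedy on $S$ accepts $u$ (so $u \notin span(A_{c_1})$) then greedy on $T$ also accepts $u$ (so $u \notin span(B_{c_1})$). The key observation is that $span(B_{c_1}) \subset span(A_{c_1})$: indeed, $B_{c_1}$ is a basis of $T_{\geq c_1}$, so $span(B_{c_1}) = span(T_{\geq c_1}) \subset span(S_{\geq c_1}) = span(A_{c_1})$, using monotonicity of span from Lemma \ref{lemma:GreedyAlgo}(4). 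Thus $u \notin span(A_{c_1})$ implies $u \notin span(B_{c_1})$, so $u \in B_{c_0}$, and the inductive hypothesis is maintained.

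For the equality when $T = \{u \in S : v(u) \geq c\}$, I only need the reverse inclusion $MWB(T) \subset MWB(S) \cap T$. This is essentially immediate: since $T$ consists of exactly the top-weight elements of $S$, the greedy algorithm on $S$ processes every element of $T$ before any element of $S \setminus T$. Hence the first $|T|$ steps of greedy on $S$ produce the same independent set as the entire run of greedy on $T$, namely $MWB(T)$, and these elements are never removed. Combined with $MWB(T) \subset T$, this gives $MWB(T) \subset MWB(S) \cap T$.

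The main subtle point is the tiebreaking convention: one needs to make the same choices when running greedy on $S$ and on $T$, so that the two runs agree on their common elements. Once that bookkeeping is pinned down (e.g., by a fixed total order refining the weight order), the rest of the argument is essentially the induction on a descending weight threshold together with the elementary span monotonicity.
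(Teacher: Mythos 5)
Your proof is correct and follows essentially the same route as the paper: both rely on the greedy characterization in Lemma~\ref{lemma:GreedyAlgo} (item 6) together with span monotonicity (item 4) to show that an element accepted by greedy on $S$ is also accepted by greedy on $T$, and both obtain the reverse inclusion for a downward-closed weight threshold $T$ by observing that greedy on $S$ processes all of $T$ first. The only stylistic difference is that you wrap the first inclusion in an induction over weight thresholds, but your inductive hypothesis is never actually invoked in the inductive step — the span containment $span(B_{c_1}) \subset span(A_{c_1})$ is derived unconditionally — so the induction is superfluous and the argument could be stated directly for a single element $u_k$, which is exactly what the paper does.
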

\begin{proof}
Label the elements of $S = \{ u_1, \cdots, u_n \}$ in decreasing order of their values. Assume $u_k \in MWB(S) \cap T$. Then since it was accepted by the Greedy Algorithm on $S$, we see that $u_k \not \in span(\cup_{i = 1}^{k - 1} \{ u_i \})$. Since $\cup_{i = 1}^{k - 1} \{ u_i \} \supset \cup_{i = 1}^{k - 1} \{ u_i \} \cap T$, we see that $u_k \not \in span(\cup_{i = 1}^{k - 1} \{ u_i \} \cap T)$, which implies that $u_k$ is selected by the Greedy Algorithm on $T$. This shows that $MWB(S) \cap T \subset MWB(T)$ for a general subset $T \subset S$. \\
Now further assume that $T = \{ u_1, \cdots, u_m \}$ for some $m \leq n$. Assume $u_k \in MWB(T)$. Since the Greedy Algorithm selected $u_k$, we know $u_k \not \in span(\cup_{i = 1}^{k - 1} \{ u_i \})$, and therefore, $u_k \in MWB(S)$. 
\end{proof}

\begin{lemma}
\label{lemma:MWBKickOutOne}
For any set $S \subset \U$ and element $u \in \U$ in a matroid $\M = (\U, \I)$, we have
\begin{equation*}
    \abs{ MWB(S) \setminus MWB(S \cup \{ u \}) } \leq 1
\end{equation*}
\end{lemma}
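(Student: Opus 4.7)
The plan is to reduce the claim to Lemma \ref{lemma:MWBInSubset} applied with inner set $T = S$ and outer set $S \cup \{u\}$. Write $A = MWB(S)$ and $B = MWB(S \cup \{u\})$. If $u \in S$ the statement is vacuous, so I assume $u \notin S$. Lemma \ref{lemma:MWBInSubset} then gives $B \cap S \subset A$, which since $u \notin S$ is the same as saying
\begin{equation*}
    B \setminus \{u\} \subset A.
\end{equation*}
So the only element that could possibly lie in $A \setminus B$ is an element of $A$ that was ``displaced'' by $u$, and the whole argument reduces to controlling how many such elements there can be.

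The next step is a rank comparison via parts (1) and (3) of Lemma \ref{lemma:GreedyAlgo}: $|A| = rank(S)$ and $|B| = rank(S \cup \{u\})$ with $rank(S) \leq rank(S \cup \{u\}) \leq rank(S) + 1$, so $|A| \leq |B| \leq |A| + 1$. I would then split into two cases. If $u \notin B$, then $B \subset S$ is independent of size $|B|$, which forces $rank(S) \geq |B| = rank(S \cup \{u\})$, hence $|B| = |A|$; combined with $B = B \setminus \{u\} \subset A$ from the first paragraph, we get $B = A$ and $A \setminus B = \emptyset$. If $u \in B$, then $|A \cap B| \geq |B \setminus \{u\}| = |B| - 1$, so
\begin{equation*}
    |A \setminus B| = |A| - |A \cap B| \leq |A| - |B| + 1 \leq 1,
\end{equation*}
using $|A| \leq |B|$ in the last inequality.

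There is no real obstacle here; the only thing to be careful about is that Lemma \ref{lemma:MWBInSubset} only yields the containment $B \cap S \subset A$ and not the reverse, which is exactly what is needed since the displacement of elements of $A$ by the insertion of $u$ is precisely what the lemma is meant to bound. The rank bookkeeping in the two cases is routine.
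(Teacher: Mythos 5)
Your proof is correct and takes essentially the same route as the paper: both apply Lemma \ref{lemma:MWBInSubset} to get $MWB(S \cup \{u\}) \cap S \subset MWB(S)$ and then compare ranks via Lemma \ref{lemma:GreedyAlgo}. The only cosmetic difference is that your case split on whether $u \in B$ is unnecessary: the inequality $|A \cap B| \geq |B| - 1$ already follows from $B \setminus \{u\} \subset A$ without knowing whether $u \in B$, which is exactly how the paper closes the argument in one step.
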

\begin{proof}
Since the size of a basis is equal to the rank of the ground set, we see that
\begin{equation*}
    \abs{MWB(S \cup \{ u \} )} \geq \abs{MWB(S)}
\end{equation*}
Also, by Lemma \ref{lemma:MWBInSubset}, we have $MWB(S \cup \{ u\}) \cap S \subset MWB(S)$. Notice that 
\begin{equation*}
    MWB(S \cup \{ u\}) \cap S = MWB(S \cup \{ u\}) \setminus \{ u \}
\end{equation*}
and therefore
\begin{equation*}
\abs{MWB(S \cup \{ u \} ) \cap S} \geq \abs{MWB(S)} - 1
\end{equation*}
and we have the desired result
\begin{equation*}
    \abs{ MWB(S) \setminus MWB(S \cup \{ u \}) } \leq \abs{ MWB(S) \setminus (MWB(S \cup \{ u \} ) \cap S) } \leq 1
\end{equation*}
\end{proof}

\newpage

\begin{proposition}
\label{prop:MWBisMWBofMWB}
For any set $S \subset \U$ and element $u \in \U$ in a matroid $\M = (\U, \I)$, we have
\begin{equation*}
    MWB(S \cup \{ u\}) = MWB(MWB(S) \cup \{ u \})
\end{equation*}
\end{proposition}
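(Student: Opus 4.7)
The plan is to prove the equality by first establishing the inclusion $MWB(S \cup \{u\}) \subset MWB(MWB(S) \cup \{u\})$ and then promoting it to equality via a rank/cardinality comparison. Both pieces turn out to be quick consequences of results already proven in the excerpt, most notably Lemma \ref{lemma:MWBInSubset}.

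For the inclusion, I would first observe the auxiliary fact $MWB(S \cup \{u\}) \subset MWB(S) \cup \{u\}$: any element $x \in MWB(S \cup \{u\})$ with $x \neq u$ must lie in $S$, so by Lemma \ref{lemma:MWBInSubset} applied to the chain $S \subset S \cup \{u\}$ we get $x \in MWB(S \cup \{u\}) \cap S \subset MWB(S)$. With this in hand, Lemma \ref{lemma:MWBInSubset} applied a second time, now to $MWB(S) \cup \{u\} \subset S \cup \{u\}$, yields
\begin{equation*}
    MWB(S \cup \{u\}) = MWB(S \cup \{u\}) \cap (MWB(S) \cup \{u\}) \subset MWB(MWB(S) \cup \{u\}),
\end{equation*}
where the first equality uses the containment just established.

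To upgrade this inclusion to equality, I would compare cardinalities. Since $MWB(S)$ is a basis of $\M\vert_S$, every element of $S$ is spanned by $MWB(S)$, so $span(MWB(S)) = span(S)$; in particular $u$ is spanned by $MWB(S)$ if and only if it is spanned by $S$. Hence $rank(MWB(S) \cup \{u\}) = rank(S \cup \{u\})$, and by Lemma \ref{lemma:GreedyAlgo}(1) the two bases on the two sides of the desired equation have the same cardinality. A subset of equal cardinality must be the whole set, closing the argument.

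The main obstacle, such as it is, lies in spotting that \emph{two} separate applications of Lemma \ref{lemma:MWBInSubset} chain together cleanly: the first to show that any element of $MWB(S \cup \{u\})$ is already in $MWB(S) \cup \{u\}$, and the second to then treat $MWB(S) \cup \{u\}$ as the new ambient set. Once that structural observation is made, the remaining rank computation is routine, and no appeal to the exchange property or to a direct analysis of the greedy algorithm on the two runs is needed.
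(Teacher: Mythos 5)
Your proof is correct, and it takes a genuinely different and in fact slicker route than the paper's. The paper proves the identity by simulating two greedy algorithms in parallel (one on $S\cup\{u\}$, one on $MWB(S)\cup\{u\}$) and showing by case analysis on whether $u$ is accepted and whether some element $u_j$ was kicked out that the two runs make identical decisions; this relies on Lemma \ref{lemma:MWBInSubset}, Lemma \ref{lemma:MWBKickOutOne}, and the augmentation axiom directly. Your argument instead extracts the inclusion $MWB(S\cup\{u\})\subset MWB(MWB(S)\cup\{u\})$ purely by chaining Lemma \ref{lemma:MWBInSubset} twice, then upgrades it to equality by a rank count. This avoids the explicit step-by-step simulation, the appeal to Lemma \ref{lemma:MWBKickOutOne}, and the delicate augmentation case split (which in the paper even has a small typo, $\{e\}$ for $\{u\}$), buying a shorter and more transparent proof. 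The one place your write-up is slightly informal is the jump from ``every element of $S$ is spanned by $MWB(S)$'' to ``$span(MWB(S))=span(S)$''; as stated this uses idempotency of span, which the paper never records. It is easily patched without it: $span(MWB(S))\subset span(S)$ by Lemma \ref{lemma:GreedyAlgo}(4), and conversely if $u\in span(S)$ then $rank(S)\geq rank(MWB(S)\cup\{u\})\geq rank(MWB(S))=rank(S)$ forces equality, so $u\in span(MWB(S))$. With that small repair the argument is complete and, in my view, preferable to the paper's.
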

\begin{proof}
Label the elements of 
\begin{equation*}
S \cup \{ u \} = \{ u_1, \cdots, u_k, u, u_{k + 1}, \cdots, u_n \}
\end{equation*}
and 
\begin{equation*}
MWB(S) \cup \{ u \} = \{ u_{i_1}, \cdots, u_{i_{k'}}, u, u_{i_{k'} + 1}, \cdots, u_{i_{n'} + 1} \}
\end{equation*}
in decreasing order of their values. Also, let $T = \{ u_1, \cdots, u_k \}$. Now imagine two greedy algorithms running in parallel, $\A_1$ on $S \cup \{ u \}$ and $\A_2$ on $MWB(S) \cup \{ u \}$. If the two algorithms encounter an element that appears in both sets, they will process the element concurrently; if they encounter an element that appears only in $S \cup \{ u \}$, only $\A_1$ will process that element, while $\A_2$ skips over that index. Let $A_i, B_i$ respectively be the set of elements that $\A_1$ and $\A_2$ have accepted after $\A_1$ processed the element $u_i$. We will prove that $\A_1, \A_2$ make the same decisions on each element index by index.
\paragraph*{1. On $u_1, \cdots, u_k$: }
By Lemma \ref{lemma:MWBInSubset},
\begin{equation*}
    MWB(T) = MWB(MWB(S) \cap T)
\end{equation*}
This means that $A_k = B_k = \{ u_{i_1}, \cdots, u_{i_{k'}} \}$, and the two algorithms make the same decision for all elements before $u$.
\paragraph*{2. On $u$: }
Since $A_k = B_k$, it is obvious that the two algorithms make the same decision for $u$ as well. $u$ will accepted if and only if $A_k \cup \{ u \} \in \I$. If $u$ was rejected, then the two algorithms essentially reduce to finding $MWB(S)$ and $MWB(MWB(S))$ respectively, which have to equal each other. So now assume $u$ was accepted.
\paragraph*{3. On $u_{k + 1}, \cdots, u_n$: }
We will use mathematical induction. Assume that $A_{i - 1} = B_{i - 1}$. If $u_i$ appears in both $S$ and $MWB(S)$, it is obvious that the two algorithms make the same decision for $u_i$. Next consider the case where $u_i$ appears only in $S$. Since $\A_2$ cannot accept this element, $\A_1$ needs to reject this element. Assume to the contrary that $u_i$ is accepted by $\A_1$. Now we will find an index $j < i$ such that $u_j \in MWB(S)$ but $u_j \not \in A_j$ and $u_j \not \in B_j$. By Lemma \ref{lemma:MWBKickOutOne}, there is at most one such index.
\subparagraph*{Case 1 - there is no such $j$: }
This means that
\begin{equation*}
    A_{i - 1} \setminus \{ u \} = MWB(S) \cap \{ u_1, \cdots, u_{i - 1} \}
\end{equation*}
That is, all elements of $MWB(S)$ until the element $u_i$ have all been captured by $A_{i - 1}$.
Then since $A_{i - 1} \cup \{ u_i \}$ is independent, so is $A_{i - 1} \cup \{ u_i \} \setminus \{ u \}$ by downward-closedness. This means that a greedy algorithm should accept $u_i$ into the max-weight basis of $S$, which is a contradiction to the assumption that $u_i \not \in MWB(S)$.
\subparagraph*{Case 2 - there is one such $j$: }
Then 
\begin{equation*}
    A_{i - 1} \cup \{ u_j \} \setminus \{ u \} = MWB(S) \cap \{ u_1, \cdots, u_{i - 1} \}
\end{equation*}
is independent. At the same time $A_{i - 1} \cup \{ u_i \}$ is independent. By augmentation property, at least one of $A_{i - 1} \cup \{ u_j \}$ or $A_{i - 1} \cup \{ u_j, u_i \} \setminus \{ e \}$ should be independent. However, the former case is contradictory to the assumption that $\A_1$ did not accept $u_j$ into $MWB(S \cup \{ u \})$, and the latter case is contradictory to the assumption that a greedy algorithm did not accept $u_i$ into $MWB(S)$. 
\end{proof}
\section{Frameworks for MSP}
In this section, we provide some of the frameworks used to analyze algorithms for MSP in previous works. We first start by introducing a continuous arrival setting for the MSP, where the elements are assumed to arrive at a time independently and randomly drawn. Then we formally define the competitive ratio of an algorithm in a MSP setting. Next we present the result from \cite{BBSW2022} that any ``Greedy'' algorithm (formal definition below) for MSP cannot have a constant competitive ratio. Lastly, we provide the Forbidden Set argument from \cite{STV2021} that identifies a sufficient condition for an algorithm for an ordinal MSP setting to have a constant competitive ratio. We will make use of these frameworks in analyzing some simple algorithms in the next section.
\subsection{Continuous Arrival Setting}
For the remaining parts of the paper, consider the continuous arrival setting for MSP: given a matroid $\M = (\U, \I)$ and a weight function $v: \U \rightarrow \R_{\geq 0}$, each element $u \in \U$ arrives at a time $t(u)$ independently and uniformly drawn from $[0, 1]$. The weight of each element is revealed to an algorithm $\A$ when it arrives, and $\A$ needs to make a irrevocable decision to accept the element or not, before it can receive another element. Let $V_t = \{ u \in \U \; \vert \; t(u) < t \}$ be the set of elements that arrive strictly before time $t$. Also, let $A$ be the set of the elements that $\A$ accepts when it terminates and $A_t = \{ u \in A \; \vert \; t(u) < t \}$ be the set of elements that the algorithm accepts strictly before time $t$.

We will assume that $\A$ satisfies the correctness property: it only outputs an independent set; that is $A \in \I$. Additionally, we will only consider algorithms that satisfy the sampling property. Before the algorithm starts, it will choose a fixed \emph{threshold time} $p$ from $[0, 1]$. The algorithm will reject (but store) all elements that arrive in the \emph{sampling phase}, before the threshold time $p$. $p$ will also be referred to as the \emph{sampling probability}. Let $S = V_p$ be the set of samples.
\subsection{Competitive Ratio}
Consider an algorithm $\A$ for the MSP, with a matroid $\M = (\U, \I)$ and weight function $v: \U \rightarrow \R_{\geq 0}$. Then the competitive ratio of $\A$ can be defined in two different ways.

\begin{definition}
An algorithm $\A$ has \emph{utility competitive ratio} of $\alpha$ if $\mathbb{E}[v(A)] \geq \alpha \cdot v(MWB(\U))$ where the expectation is taken over the randomness of $\A$
\end{definition}

\begin{definition}
An algorithm $\A$ has \emph{probability competitive ratio} of $\alpha$ if $\Pr[u \in A] \geq \alpha$ for any $u \in MWB(\U)$ where the probability is taken over the randomness of $\A$
\end{definition}

It is obvious that probability competitive ratio is a stronger version of the two: any algorithm that has probability competitive ratio of $\alpha$ also has utility competitive ratio of $\alpha$. In most MSP settings, the competitive ratio is usually evaluated with the utility competitive ratio, but sometimes it is necessary to make use of the probability version. One such setting is when we consider the ordinal version of the MSP.

Because the max-weight basis of a matroid can be found with the greedy algorithm \ref{algo:Greedy}, the actual weights of the elements do not particularly matter; it is only the ordering of the weights that matter when identifying the max-weight basis. Therefore, it is possible to rewrite the MSP such that we are only given the relative weight ordering of the elements, not the actual values of the weights. In such case, it is impossible to evaluate an algorithm with the utility competitive ratio, and we need to rely on the probability competitive ratio.
\subsection{Forbidden Sets}
In this part, we present the framework of Forbidden Sets from \cite{STV2021}. For the purpose of this part, we consider the ordinal MSP setting: an algorithm is provided a matroid $\M = (\U, \I)$ and the relative ordering of the weights of the elements, but not the specific values. In such a setting, Soto et al. \cite{STV2021} found a property of an algorithm that is a necessary condition for having a constant probability competitive ratio.

\begin{definition}
An algorithm $\A$ for an ordinal MSP has \emph{Forbidden Sets} of size $k$, if for every triple $(X, Y, u)$ with $Y \subset \U, u \in MWB(Y)$ and $X \subset Y \setminus \{ u \}$, one can define a set $\F(X, Y, u) \subset X$ of at most $k$ \emph{forbidden elements} of $X$ such that the following conditions holds. 
\begin{itemize}
    \item Let $u$ be an element that arrives after the sampling phase. If $u \in MWB(V_{t(u)} \cup \{ u \})$ and for every $u' \in V_{t(u)} \setminus S$, we have $u' \not \in \F(V_{t(u')} \cup \{ u'\}, V_{t(u)} \cup \{ u \}, u)$, then $u \in A$
\end{itemize}
\end{definition}

To explain the Forbidden Set property in simpler words: for any pair of an optimal element $u$ we want to add and the current configuration $Y$ it appeared in, and for each possible previous configurations $X$ for that pair, I want to define a set of ``forbidden elements''. If each of the elements that arrived before $u$ was not one of the forbidden elements for its respective configuration they arrived in, then the algorithm must accept $u$. If it is possible to define a set of forbidden elements for all possible scenarios such that it is consistent with the decisions of the algorithm, the algorithm can achieve a constant probability competitive ratio with a specific choice of sampling probability $p$.

\begin{theorem} [From \cite{STV2021}]
\label{thm:ForbiddenSetsConstant}
If an algorithm $\A$ has Forbidden Sets of size $k$, then it obtains a constant probability competitive ratio of $\alpha(k)$ with the choice of sampling probability $p(k)$ where
\begin{equation*}
    (\alpha(k), p(k)) = 
    \begin{cases}
        \left( \frac{1}{e}, \frac{1}{e} \right) & k = 1 \\
        \left( k^{-\frac{k}{k - 1}}, k^{-\frac{1}{k - 1}} \right) & k \geq 2
    \end{cases}
\end{equation*}
\end{theorem}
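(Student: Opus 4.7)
The plan is to fix an arbitrary $u \in MWB(\U)$ and lower bound $\Pr[u \in A]$ by using the Forbidden Set property to reduce the problem to a combinatorial question about uniform random orderings. First, applying Lemma \ref{lemma:MWBInSubset} with $S = \U$ and $T = V_{t(u)} \cup \{u\}$ shows that $u \in MWB(V_{t(u)} \cup \{u\})$ whenever $u \in MWB(\U)$, so the Forbidden Set definition yields the sufficient condition: $u \in A$ holds whenever $t(u) > p$ and no $u' \in V_{t(u)} \setminus S$ belongs to $\F(V_{t(u')} \cup \{u'\}, V_{t(u)} \cup \{u\}, u)$.

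Next I would condition on $t(u) = t \in (p, 1]$ and on $V_t = W$ with $m = |W|$. By standard properties of i.i.d.\ uniform arrival times, conditional on $V_t = W$ the arrival data of $W$ can be sampled equivalently as a uniform random ordering $w_{\sigma(1)}, \ldots, w_{\sigma(m)}$ of $W$ together with an \emph{independent} count $m_0 \sim \mathrm{Bin}(m, p/t)$, such that $w_{\sigma(1)}, \ldots, w_{\sigma(m_0)}$ are the sampling-phase arrivals and $w_{\sigma(m_0 + 1)}, \ldots, w_{\sigma(m)}$ are the post-sampling arrivals. For $i \geq m_0 + 1$ the forbidden-set check on $w_{\sigma(i)}$ becomes whether $w_{\sigma(i)} \in \F(\{w_{\sigma(1)}, \ldots, w_{\sigma(i)}\}, W \cup \{u\}, u)$. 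I would then run a reverse induction on $i$ from $m$ down to $m_0 + 1$: after conditioning on the suffix $w_{\sigma(i+1)}, \ldots, w_{\sigma(m)}$, the prefix set $\{w_{\sigma(1)}, \ldots, w_{\sigma(i)}\} = W \setminus \{w_{\sigma(i+1)}, \ldots, w_{\sigma(m)}\}$ is determined, the corresponding forbidden set contains at most $k$ elements of this prefix, and $w_{\sigma(i)}$ is uniform on the prefix, so $\Pr[w_{\sigma(i)} \text{ forbidden} \mid \text{suffix}] \leq k/i$. Telescoping yields
\begin{equation*}
\Pr[\text{no forbidden element} \mid t, W, m_0] \;\geq\; \prod_{i = m_0 + 1}^{m} \left(1 - \frac{k}{i}\right) \;=\; \frac{\binom{m_0}{k}}{\binom{m}{k}}.
\end{equation*}

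Finally I would use the identity $\mathbb{E}_{X \sim \mathrm{Bin}(m, q)}\!\bigl[\binom{X}{k}\bigr] = \binom{m}{k}\, q^k$ with $q = p/t$ to average over $m_0$, obtaining $\Pr[u \in A \mid t(u) = t] \geq (p/t)^k$ for every $t \in (p, 1]$, and then integrate:
\begin{equation*}
\Pr[u \in A] \;\geq\; \int_p^1 \left(\frac{p}{t}\right)^{\!k} dt,
\end{equation*}
which evaluates to $p \ln(1/p)$ for $k = 1$ and $(p - p^k)/(k - 1)$ for $k \geq 2$. Elementary calculus in $p$ (set the derivative to zero) yields the claimed optimal pair $(\alpha(k), p(k))$: $p = 1/e$ gives value $1/e$ for $k = 1$, while $p = k^{-1/(k-1)}$ gives value $k^{-k/(k-1)}$ for $k \geq 2$. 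The main obstacle is the reverse-induction step: one must simultaneously exploit that $\F(X, Y, u)$ depends only on the set $X$ (so conditioning on the suffix makes the forbidden set deterministic) and that uniform random permutations have the symmetry that conditioning on the suffix leaves the next earlier element uniform over the complement. Both facts are easy in isolation, but together they require careful bookkeeping of which randomness has been integrated out at each step so that the uniform distribution of $w_{\sigma(i)}$ over the full prefix of size $i$, rather than a smaller set, is genuinely preserved.
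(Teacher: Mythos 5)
The paper does not include its own proof of this theorem; it is cited verbatim from \cite{STV2021} (Soto, Turkieltaub, and Verdugo), so there is no in-paper argument to compare against. That said, your reconstruction is correct and is essentially the argument from the cited reference: you correctly (i) invoke Lemma \ref{lemma:MWBInSubset} to propagate $u \in MWB(\U)$ down to $u \in MWB(V_{t(u)} \cup \{u\})$; (ii) condition on $t(u) = t$ and $V_t = W$, decomposing the arrival data into an independent pair (uniform permutation of $W$, binomial count $m_0$), which is valid because for i.i.d.\ continuous arrivals the rank permutation is independent of the order statistics and $m_0$ is a function of the latter; (iii) run the reverse chain rule over the post-sample positions using that each event $E_i = \{w_{\sigma(i)} \text{ not forbidden}\}$ is measurable with respect to $\sigma(w_{\sigma(i)}, \ldots, w_{\sigma(m)})$ and that $\mathbb{E}[\1_{E_i} \mid w_{\sigma(i+1)}, \ldots, w_{\sigma(m)}] \geq 1 - k/i$ holds pointwise (the edge case $m_0 < k$ just forces the bound to $0 = \binom{m_0}{k}/\binom{m}{k}$, which is vacuously fine); and (iv) apply the identity $\mathbb{E}[\binom{X}{k}] = \binom{m}{k}q^k$ for $X \sim \mathrm{Bin}(m,q)$ to collapse the bound to $(p/t)^k$, integrate over $t \in (p,1]$, and optimize in $p$ to recover the stated $(\alpha(k), p(k))$ for both $k=1$ and $k\geq 2$. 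The ``careful bookkeeping'' you flag at the end is exactly the tower-property step in (iii) — conditioning on the full suffix (which determines the prefix \emph{set} and hence the forbidden set deterministically) rather than merely on the events $E_{i+1}, \ldots, E_m$ — and you handle it correctly.
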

\subsection{Greedy Algorithms}
In this part, we present the framework of ``Greedy'' Algorithms from \cite{BBSW2022}. This class of algorithms extend the idea of greedy algorithms from Classical or Multiple Choice Secretary Problem into a matroid setting: an element is accepted based on feasibility and optimality. If an element $u$ is a ``good enough'' element, then $u$ will be accepted as long as it is feasible to do so. The formal definition is as follows

\begin{definition}
\label{def:GreedyAlgoMSP}
An algorithm $\A$ is a \emph{Greedy Algorithm for MSP} (or just \emph{Greedy Algorithm} if the meaning can be understood unambiguously) if it satisfies the following conditions
\begin{enumerate}
    \item \label{item:defGreedyAlgoIt} At all times $t \geq p$, maintain an independent set $I_t \in \I$ such that
    \begin{itemize}
        \item $A_t\subset I_t\subset A_t\cup S$
        \item $V_t = span(I_t)$
    \end{itemize}
    \item Accept $u$ if and only if $u \in MWB((\M |_{I_{t(u)}\cup \{u\}}) /  A_{t(u)})$
\end{enumerate}
\end{definition}

$I_t$ in the definition above serves as a reference set, which can be used to decide if a newly arrived element $u$ is good enough. Then, when we contract the matroid with $A_t$, we only consider elements that are feasible. The definition of a Greedy Algorithm is quite extensive. Dynkin's algorithm from section \ref{section:CSP} and Optimistic algorithm from section \ref{section:MCSP}, the algorithms that obtain optimal solutions for the Classical and the Multiple Choice Secretary Problem both fit the definition. However, it still fails to capture similar greedy-like algorithms as we will discuss more in next section.

\begin{remark}
All algorithms we will consider in this paper mimic the basic idea of a Greedy Algorithm: it stores and maintains the max-weight basis of some set $S_t$ at all times $t$ as a reference set. When a new element $u$ arrives at time $t(u)$, then we may update $MWB(S_{t(u)})$ to $MWB(S_{t(u)} \cup \{ u \})$.  Proposition \ref{prop:MWBisMWBofMWB} guarantees that storing just the max-weight basis of $S_t$ is sufficient to encode the information of $S_t$. Then Lemma \ref{lemma:MWBKickOutOne} says that there is at most one element that will be removed from $MWB(S_{t(u)})$, when trying to add the new element. If such an element exists, we informally refer to it as the element that the algorithm \emph{kicks out}. Moreover, by Lemma \ref{lemma:MWBInSubset}, an element that is kicked out is discarded forever and is never included in the max-weight basis of $S_t$ for any future time $t$. In an online algorithm setting, this is a useful property because the amount of data we need to store is upper bounded by a function of $k$, the rank of the matroid, not $n$, the number of elements.
\end{remark}

An interesting result of \cite{BBSW2022} was that any instance of Greedy Algorithm cannot achieve a constant utility competitive ratio for the MSP. The paper introduces the following graph, which works as an adversarial input to any Greedy Algorithm.

\begin{definition}
\label{def:HatGraph}
\emph{Hat Graph} is a graph $G = (V, E)$ with $n + 2$ vertices and $2n + 1$ edges where
\begin{itemize}
    \item $V = \{ v_t, v_b, v_1, v_2, \cdots, v_n \}$ 
    \item $E = \{ e_{\infty} = (v_t, v_b) \} \cup \{ t_i = (v_t, v_i) \; \vert \; i \in [n] \} \cup \{ b_i = (v_b, v_i) \; \vert \; i \in [n] \}$
\end{itemize}
The edge $e_\infty$ is called the \emph{infinity edge} and edges $t_i$ are called \emph{top edges} and the edges $b_i$ are called \emph{bottom edges}. The set $\{ t_i, b_i \}$ of corresponding top and bottom edges is called a \emph{claw}. Additionally, the weight function $v: E \rightarrow \R_{\geq 0}$ is given to the set of edges such that
\begin{equation*}
    v(e_\infty) >> v(t_1) > \cdots > v(t_n) > v(b_1) > \cdots > v(b_n)
\end{equation*}
where the weight of the infinity edge is significantly larger than the sum of the remaining weights.
\end{definition}

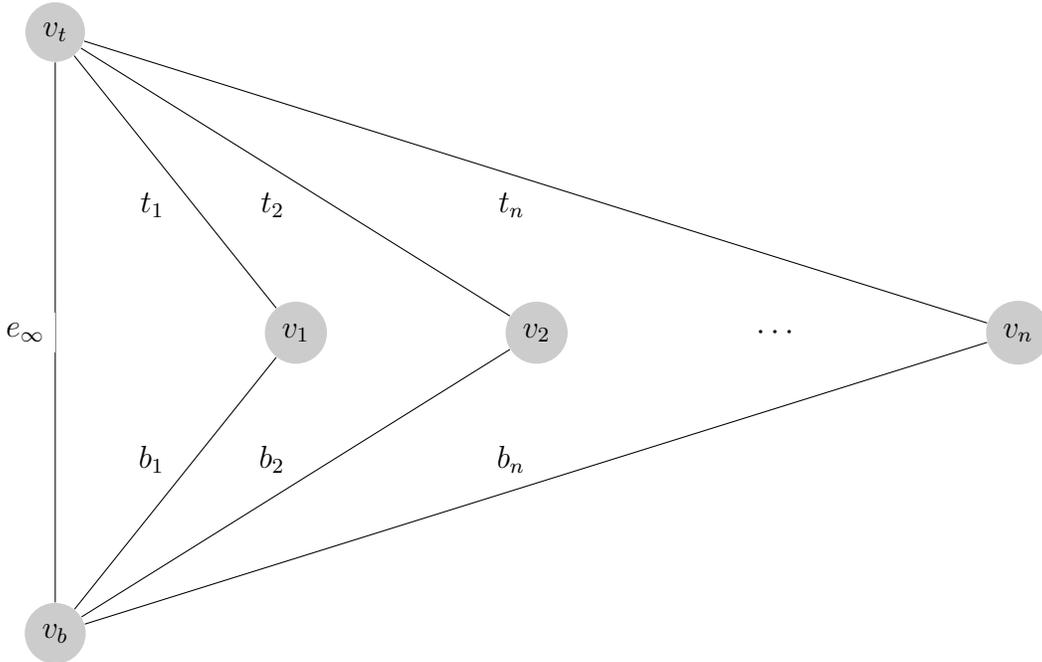
\begin{figure}[h]
\centering
    \begin{tikzpicture} [scale=.8,auto=left]
        \node (vt) at (0, 10) [circle, fill=black!20] {$v_t$};
        \node (vb) at (0, 0)  [circle, fill=black!20] {$v_b$};
        \node (v1) at (4, 5)  [circle, fill=black!20] {$v_1$};
        \node (v2) at (8, 5) [circle, fill=black!20] {$v_2$};
        \node (vn) at (16, 5) [circle, fill=black!20] {$v_n$};
        \node (skip) at (12, 5)  [circle, draw=white!20, fill=white!20] {$\cdots$};
        
        \draw (vb) -- (vt) node [midway, fill=white] {$e_\infty$};      
        \draw (v1) -- (vt) node [midway, fill=white] {$t_1$};
        \draw (v2) -- (vt) node [midway, fill=white] {$t_2$};
        \draw (vn) -- (vt) node [midway, fill=white] {$t_n$};
        \draw (vb) -- (v1) node [midway, fill=white] {$b_1$};
        \draw (vb) -- (v2) node [midway, fill=white] {$b_2$};
        \draw (vb) -- (vn) node [midway, fill=white] {$b_n$};
    \end{tikzpicture}
\caption{Hat Graph as defined in \ref{def:HatGraph}}
\end{figure}

\begin{theorem}[From \cite{BBSW2022}]
A Greedy Algorithm cannot obtain a constant utility competitive ratio on the Hat Graph
\end{theorem}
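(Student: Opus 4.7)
The plan is to reduce the theorem to the probabilistic claim that $\Pr[e_\infty \in A] \to 0$ as $n \to \infty$ for any fixed Greedy Algorithm, and then to establish this via an analysis of when the greedy acceptance rule forces a full claw to be accepted.

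The reduction is direct: since $v(e_\infty) \gg \sum_{i=1}^n v(t_i)$, we have $v(MWB(\U)) = (1 + o(1)) v(e_\infty)$, and any edge set excluding $e_\infty$ has weight $o(v(e_\infty))$, so the utility competitive ratio is at most $\Pr[e_\infty \in A] + o(1)$. Moreover, because $e_\infty$ is the heaviest element of $\U$, the greedy procedure computing $MWB((\M|_{I_{t(e_\infty)} \cup \{e_\infty\}})/A_{t(e_\infty)})$ processes $e_\infty$ first, and so it is accepted iff $\{e_\infty\} \cup A_{t(e_\infty)}$ is acyclic in $G$. In the Hat Graph each edge of $E \setminus \{e_\infty\}$ is incident to exactly one of $\{v_t, v_b\}$, so $A_{t(e_\infty)}$ contains a $v_t$-$v_b$ path iff some index $i$ has both $t_i$ and $b_i$ in $A$, i.e., some claw is \emph{fully accepted}. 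Thus $e_\infty \in A$ iff no claw is fully accepted by time $t(e_\infty)$.

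The core step is to show that with probability tending to $1$, some claw is fully accepted before $e_\infty$ arrives. I plan a forcing argument: the first-arriving edge of any claw after sampling is always accepted, since the heavier elements in $I_t \setminus A_t$ are sampled top edges, all incident only to $v_t$, which cannot preempt the new edge in the greedy order on the contracted matroid; for the second-arriving edge, a similar argument goes through under a ``typicality'' condition on the samples, analyzed using Lemma~\ref{lemma:MWBInSubset} and Proposition~\ref{prop:MWBisMWBofMWB}. The main obstacle is precisely the sample interference: as a small case analysis reveals, a sampled edge $t_j$ combined with an accepted edge $b_j$ from the same other claw can form a spurious $v_t$-$v_b$ path after contracting by $A_{t(u)}$ that makes the arriving edge parallel to $t_j$ in the contracted matroid and preempts it. I will handle this via a coupling or conditional-probability argument, showing that a constant fraction of the $\Theta(n(\tau - p)^2)$ claws that fully arrive in the post-sampling window remain typical and hence are fully accepted.

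Finally, conditioning on $\tau = t(e_\infty)$, the probability that no claw is fully accepted is at most $(1 - c(\tau - p)^2)^n$ for a universal constant $c > 0$, which vanishes pointwise for $\tau \in (p, 1]$ as $n \to \infty$. Integrating over $\tau \sim \mathrm{Uniform}[0,1]$ and applying dominated convergence gives $\Pr[e_\infty \in A] \to 0$, completing the proof.
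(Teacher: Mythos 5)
Your reduction to $\Pr[e_\infty \in A] \to 0$, and the characterization that after the sampling phase $e_\infty$ is accepted if and only if $A_{t(e_\infty)} \cup \{e_\infty\}$ is acyclic (equivalently, no claw has been fully accepted), are both correct; since the paper only gestures at this idea and defers to \cite{BBSW2022}, your sketch attempts genuinely more than the paper records. But there is a concrete error in the core probabilistic step. You claim that ``a constant fraction of the $\Theta(n(\tau - p)^2)$ claws that fully arrive in the post-sampling window remain typical and hence are fully accepted.'' That cannot happen: $A$ is independent in the graphic matroid, and two fully accepted claws $\{t_i, b_i\}$ and $\{t_j, b_j\}$ form the $4$-cycle $v_t, v_i, v_b, v_j$, so at most one claw is ever fully accepted. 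What you need is that with high probability \emph{at least one} claw is fully accepted, and the bound $(1 - c(\tau - p)^2)^n$ has the right shape but is asserted rather than established.

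The genuine gap is exactly the ``coupling or conditional-probability argument'' you defer. A Greedy Algorithm chooses $I_t$ adaptively subject only to $A_t \subset I_t \subset A_t \cup S$ and $span(I_t) = V_t$. As soon as some claw is fully seen in $V_t$ (e.g., some claw has both edges sampled, which happens with probability $1 - (1 - p^2)^n \to 1$), $I_t$ is forced to contain some $v_t$-$v_b$ path $\{t_k, b_k\} \subset A_t \cup S$, but the algorithm gets to choose which one. Whether the second post-sampling edge of a Type-D claw $i$ is accepted then depends on whether that chosen path satisfies $k < i$ or $b_k \in A_t$ --- a condition mixing the random sample, the arrival order of \emph{other} claws' edges, and the algorithm's adversarial choice of reference set. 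Treating the per-claw ``typicality'' events as independent Bernoulli trials of probability $c(\tau - p)^2$, as your bound implicitly does, does not follow from what you've written; you need an argument robust to the worst-case $I_t$, for instance by fixing the smallest index $k_0$ whose claw ever lies entirely inside $A_t \cup S$ and then arguing that some Type-D claw with index below $k_0$ fully arrives before $e_\infty$ with probability tending to $1$. A smaller issue: your claim that the heavier elements of $I_t \setminus A_t$ are ``all incident only to $v_t$'' is only true when the arriving edge is a top edge; when $b_i$ arrives, sampled $b_j$ with $j < i$ are heavier and incident to $v_b$. The conclusion for the first edge of an unsampled claw still holds because $v_i$ is isolated in $I_t$, but the stated reasoning is incomplete.
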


The full proof will not be included here, but the basic idea of the proof is to show that the probability that $e_\infty$ is rejected by a Greedy Algorithm converges to 1 as $n$ goes to infinity. This same idea will be used in a later section.
\section{Results}
\subsection{Algorithms That Make Decisions Based on Samples}
In this section, we consider two algorithms for MSP, which are defined by slightly modifying the definition for a Greedy Algorithm. Instead of using $I_t$ as a reference set to determine if the newly seen element $u$ is a good enough element, the two will both use just $S$, the set of samples. This resembles the idea of Dynkin's algorithm or the Optimistic Algorithm , where the best elements of $S$ were used as reference elements until the algorithm terminated. 

While both algorithms reduce to Dynkin's algorithm for the Classical Secretary Problem, they act slightly differently on matroids. One of them will check if a newly seen element $u$ is in the max-weight basis when the matroid is contracted by $A_t$, just like in a Greedy Algorithm; the other will only check if $A_t \cup \{ u \}$ is independent. Interestingly, the first algorithm will still be an instance of the Greedy Algorithm with a correct choice of $I_t$. The second algorithm will be proven not to be a Greedy Algorithm. We present the formal definitions of the two algorithms below.

\begin{definition}
Define the algorithm \emph{SAMPLE-CONTRACTED} as the following
\begin{enumerate}
    \item For all $t \geq p$, accept $u$ if and only if $u \in MWB((\M /  A_{t(u)})|_{S \cup \{u\}})$
\end{enumerate}
\end{definition}

\begin{definition}
Define the algorithm \emph{SAMPLE} as the following
\begin{enumerate}
    \item For all $t \geq p$, accept $u$ if and only if 
    \begin{itemize}
        \item $A_{t(u)} \cup \{ u \}$ is independent
        \item $u \in MWB(\M|_{S \cup \{ u \}})$
    \end{itemize}
\end{enumerate}
\end{definition}

On $k$-uniform matroids, the SAMPLE-CONTRACTED algorithm reduces to the Optimistic Algorithm: it uses the $(k - i)$-th heaviest sample as a reference element where $i$ is the number of accepted elements so far. The algorithm SAMPLE on the other hand, is using the weight of the $k$-th heaviest sample as the threshold throughout the entire algorithm.

Now we prove a few lemmas to show that SAMPLE-CONTRACTED is an instance of Greedy Algorithm.

\begin{lemma}
\label{lemma:SampleContracted1}
Let $I_t = MWB((\M / A_t)|_S) \cup A_t$. Then this choice of $I_t$ satisfies the condition \ref{item:defGreedyAlgoIt} of the Definition \ref{def:GreedyAlgoMSP} of a Greedy Algorithm.  
\end{lemma}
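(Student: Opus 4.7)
The plan is to verify the three requirements of condition \ref{item:defGreedyAlgoIt} of Definition \ref{def:GreedyAlgoMSP}: that $I_t \in \I$, that $A_t \subset I_t \subset A_t \cup S$, and that $V_t = span(I_t)$ (which I interpret as $V_t \subset span(I_t)$, i.e.\ every element observed so far is spanned by $I_t$). The first two conditions reduce to definition unfolding: by construction $MWB((\M / A_t)|_S)$ is an independent set of $\M / A_t$, so the contraction definition yields $A_t \cup MWB((\M / A_t)|_S) = I_t \in \I$, and the double containment is immediate from $MWB((\M/A_t)|_S) \subset S$.

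For the span condition, my first step is the structural observation that $I_t$ is a basis of $A_t \cup S$ inside $\M$. The computation is
\[
|I_t| = |A_t| + |MWB((\M/A_t)|_S)| = |A_t| + rank_{\M/A_t}(S) = rank_\M(A_t \cup S),
\]
using item 1 of Lemma \ref{lemma:GreedyAlgo} together with the standard rank-contraction identity $rank_{\M/A_t}(S) = rank_\M(A_t \cup S) - |A_t|$. Since $I_t$ is an independent subset of $A_t \cup S$ of maximal size, we get $span_\M(I_t) = span_\M(A_t \cup S) \supset A_t \cup S$, so every element of $A_t$ and of $S$ is automatically spanned by $I_t$.

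It then remains to handle the rejected elements, i.e.\ $u \in V_t \setminus (A_t \cup S)$, which arrived at some $t(u) \in [p, t)$ and was not accepted. The rejection rule gives $u \not\in MWB((\M/A_{t(u)})|_{S \cup \{u\}})$. I would invoke the greedy characterization (item 6 of Lemma \ref{lemma:GreedyAlgo}) inside the contracted matroid $\M / A_{t(u)}$ to conclude that $u$ lies in the $\M/A_{t(u)}$-span of $B^> := MWB((\M/A_{t(u)})|_{\{v \in S : v(v) > v(u)\}})$. Because $\{v \in S : v(v) > v(u)\}$ is a top subset of $S$, the second part of Lemma \ref{lemma:MWBInSubset} applied inside $\M / A_{t(u)}$ gives $B^> \subset MWB((\M/A_{t(u)})|_S)$, so $u$ is spanned in $\M/A_{t(u)}$ by $MWB((\M/A_{t(u)})|_S)$. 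Translating back through the contraction, this is $u \in span_\M(I_{t(u)}) = span_\M(A_{t(u)} \cup S) \subset span_\M(A_t \cup S) = span_\M(I_t)$, where the inclusion uses monotonicity of span (item 4 of Lemma \ref{lemma:GreedyAlgo}).

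The hard part I anticipate is this last paragraph: identifying the correct top subset of $S$ to feed into Lemma \ref{lemma:MWBInSubset}, and carefully converting spans between $\M$ and its contraction $\M / A_{t(u)}$. Once that translation is in hand, the rest of the proof is essentially a definition chase anchored on the rank identity and on the greedy rule for membership in a max-weight basis.
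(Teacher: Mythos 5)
Your proof is correct, and it takes a genuinely different route from the paper's. For the span condition, the paper splits $V_t \setminus I_t$ into two cases. For $u \in S$ it runs a contradiction argument off basis maximality; for rejected $u \notin S$ it performs a rather delicate induction over the elements $a_1, \dots, a_{k'}$ accepted between $t(u)$ and $t$, invoking Lemma~\ref{lemma:SpanTheSame} at each step to track how the spanning certificate for $u$ mutates as samples get kicked out of the reference set. Your argument replaces this local bookkeeping with a single global observation: the rank computation $|I_t| = |A_t| + rank_{\M/A_t}(S) = rank_\M(A_t \cup S)$ shows $I_t$ is a basis of $A_t \cup S$, so $span_\M(I_t) = span_\M(A_t \cup S)$, which disposes of $A_t$ and $S$ at once. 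For a rejected element $u$, you then use the greedy characterization (item~6 of Lemma~\ref{lemma:GreedyAlgo}) in $\M/A_{t(u)}$ together with the top-subset case of Lemma~\ref{lemma:MWBInSubset} to conclude $u \in span_\M(I_{t(u)}) = span_\M(A_{t(u)} \cup S)$ at the moment of rejection, and monotonicity of span ($A_{t(u)} \subset A_t$) does the rest. This avoids Lemma~\ref{lemma:SpanTheSame} entirely and is cleaner, since the inductive ``track each $a_j$'' step is precisely the point where the paper's argument is hardest to audit. The one ingredient you use beyond what the paper lists explicitly is the rank-contraction identity $rank_{\M/A}(X) = rank_\M(X \cup A) - rank_\M(A)$, which you flag; since it is standard matroid theory on par with the other unproved facts in Lemma~\ref{lemma:GreedyAlgo}, this is a fair borrow rather than a gap.
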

\begin{proof}
Let us first prove that $I_t$ is independent. By the definition of a restriction of a matroid, any independent set $I$ in the $(\M / A_t) |_S$ is a subset of a independent set in $\M / A_t$, and by downward-closedness, it is independent in $\M / A_t$. By the definition of a contracted matroid, $I \cup A_t$ is independent in $\M$. Therefore, $I_t \in \I$. Next, it is obvious that $A_t \subset I_t \subset A_t \cup S$. Now we want to prove that $V_t = span(I_t)$. Since $I_t$ is independent, all we need to prove is for any $u \in V_t \setminus I_t$, we have $I_t \cup \{ u \} \not \in \I$. 

First consider the case where $u \in S$. Assume to the contrary that $I_t \cup \{ u \} \in \I$. Let $I = MWB((\M \setminus A_t)|_S)$. Then, $I \cup \{ u \}$ is independent in $\M \setminus A_t$, and since $I \cup \{ u \}$ is entirely contained in $S$, we see that it is also independent in $(\M \setminus A_t)|_S$. This goes against the assumption that $I$ is a basis for the matroid $(\M \setminus A_t)|_S$. 

Now consider the case where $u \not \in S$. Since $u \not \in A_t$, $u$ has to be one of the elements that the algorithm rejected after the time $p$. Then if let $t' := t(u)$, we would have had $u \not \in MWB((\M |_{I_{t'}\cup \{u\}}) /  A_{t'})$. If we let $s_1, \cdots, s_k \in S \cap I_{t'}$ be the elements with larger weight than $u$ (in the decreasing order of their weights), then $u \in span(s_1, \cdots, s_k)$ in the contracted matroid $(\M |_{I_{t'}\cup \{u\}}) /  A_{t'}$. Now consider all elements $a_1, \cdots, a_{k'} \in A_t \setminus A_{t'}$ in the order in which they arrive. When $a_j$ got accepted, we know from an earlier observation that it would have kicked out at most one element of $S$ from $I_{t(a_j)}$. First consider the case where it did not kick out any element of $S$. Then it is clear that $u \in span(s_1, \cdots, s_k, a_j)$ in the same contracted matroid $(\M |_{I_{t'}\cup \{u\}}) /  A_{t'}$. Therefore, $u \in span(s_1, \cdots, s_k)$ in the newly contracted matroid $(\M |_{I_{t'}\cup \{a_j, u\}}) /  (A_{t'} \cup \{ a_j \} )$. Next consider the case where it kicked out the element $s_{i_j} \in S$. Then by Lemma \ref{lemma:SpanTheSame}, we see that $u \in span(s_1, \cdots, s_{i_j - 1}, a_j, s_{i_j + 1}, \cdots,  s_k)$ in the matroid. Therefore, $u \in span(s_1, \cdots, s_{i_j - 1}, s_{i_j + 1}, \cdots, s_k)$ in the newly contracted matroid $(\M |_{I_{t'}\cup \{a_j, u\}}) /  (A_{t'} \cup \{ a_j \} )$. If we apply this same logic for each $a_j$ inductively, we see that $u \in span(\{ s_1, \cdots, s_k \} \cap I_t)$ in the contracted matroid $(\M |_{I_{t}\cup \{u\}}) /  A_{t})$. This shows that $I_t \cup \{ u \}$ is dependent.
\end{proof}

\begin{lemma}
\label{lemma:SampleContracted2}
The Greedy Algorithm with the choice of $I_t = MWB((\M / A_t)|_S) \cup A_t$ is equivalent to the algorithm SAMPLE-CONTRACTED
\end{lemma}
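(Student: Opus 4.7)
The plan is a short induction on the sequence of arrivals after the sampling phase, showing that both algorithms share the same accepted set $A_t$ and agree on every decision. For the base case, during $[0,p)$ both algorithms reject every element and merely build up the sample set $S$, so $A_p = \emptyset$ in both runs.

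For the inductive step, suppose both algorithms have produced the same accepted set $A := A_t$ just before an element $u$ arrives at some time $t \geq p$, and take $I_t := MWB((\M / A)|_S) \cup A$ as prescribed. First I would record two small bookkeeping facts: $A \cap S = \emptyset$ (nothing is accepted during the sampling phase, so $A$ contains only elements arriving after $p$), and $u \notin A$ (since $u$ has just arrived). In particular $I_t \setminus A = MWB((\M/A)|_S)$ and $A \subset I_t \cup \{u\}$. Then I would invoke the standard matroid identity
\[
(\M|_T)/A \;=\; (\M/A)|_{T \setminus A} \qquad \text{whenever } A \subset T,
\]
applied with $T = I_t \cup \{u\}$, to rewrite the Greedy Algorithm's acceptance condition $u \in MWB((\M|_{I_t \cup \{u\}}) / A)$ as
\[
u \;\in\; MWB\!\left((\M/A)\big|_{MWB((\M/A)|_S)\,\cup\,\{u\}}\right).
\]

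Now I would apply Proposition \ref{prop:MWBisMWBofMWB} inside the matroid $\M/A$, with set $S$ and element $u$, yielding
\[
MWB\!\left((\M/A)\big|_{S \cup \{u\}}\right) \;=\; MWB\!\left((\M/A)\big|_{MWB((\M/A)|_S)\,\cup\,\{u\}}\right).
\]
The left-hand side is exactly the condition used by SAMPLE-CONTRACTED, so the two acceptance tests coincide and the two algorithms make the same decision on $u$. Hence $A_{t(u)^+}$ agrees in both runs, which closes the induction and finishes the proof.

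The only real technicality is the matroid identity $(\M|_T)/A = (\M/A)|_{T \setminus A}$, which follows directly from unrolling the definitions of restriction and contraction (the independent sets in both descriptions are $\{J \subset T \setminus A : J \cup A \in \I\}$). Apart from that, Lemma \ref{lemma:SampleContracted1} has already certified that this choice of $I_t$ is a legal reference set for a Greedy Algorithm, so all of the work in the equivalence is carried by Proposition \ref{prop:MWBisMWBofMWB}.
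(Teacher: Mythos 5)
Your proof is correct and follows essentially the same route as the paper's: apply Proposition \ref{prop:MWBisMWBofMWB} inside the contracted matroid $\M/A_t$ and identify the two acceptance tests. The paper's version is terser, leaving the restriction--contraction commutativity $(\M|_T)/A = (\M/A)|_{T\setminus A}$ and the ``same-$A_t$'' induction implicit; you make both explicit, which is a more careful write-up of the same argument rather than a different one.
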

\begin{proof}
This is just an application of Proposition \ref{prop:MWBisMWBofMWB} in a contracted matroid. We just take the max-weight basis of the equation below 
\begin{equation*}
MWB(S \cup \{ u \}) = MWB(MWB(S) \cup \{ u \})
\end{equation*}
to be taken over the contracted matroid $(\M / A_t)$
\begin{equation*}
    MWB((\M / A_t) |_{S \cup \{ u \}} ) = MWB( (\M|_{MWB((\M / A_t)|_S) \cup \{ u \} \cup A_t }) / A_t )
\end{equation*}
If we denote $MWB((\M / A_t)|_S) \cup A_t$ by $I_t$, we easily see that the two algorithms are equivalent.
\end{proof}

Combining the results of Lemma \ref{lemma:SampleContracted1} and Lemma \ref{lemma:SampleContracted2}, we get the following theorem.
\begin{theorem}
SAMPLE-CONTRACTED is a Greedy Algorithm
\end{theorem}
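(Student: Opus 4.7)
The plan is short: the theorem is a direct synthesis of Lemma \ref{lemma:SampleContracted1} and Lemma \ref{lemma:SampleContracted2}, so my job is just to witness the two conditions of Definition \ref{def:GreedyAlgoMSP} for SAMPLE-CONTRACTED and cite the two lemmas for each.

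First I would pick the candidate reference set that the prior lemmas were designed around, namely $I_t := MWB((\M/A_t)|_S)\cup A_t$. This is the natural choice because SAMPLE-CONTRACTED's acceptance rule is phrased in terms of $(\M/A_{t(u)})|_{S\cup\{u\}}$, and Proposition \ref{prop:MWBisMWBofMWB} suggests that the max-weight basis of the samples in the contracted matroid already carries enough information to replay the algorithm.

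Next, I would verify condition \ref{item:defGreedyAlgoIt} of Definition \ref{def:GreedyAlgoMSP} by quoting Lemma \ref{lemma:SampleContracted1}, which establishes exactly the three requirements: $I_t\in\I$, the sandwich $A_t\subset I_t\subset A_t\cup S$, and $V_t=\mathrm{span}(I_t)$. Then I would verify the acceptance rule (condition 2) by quoting Lemma \ref{lemma:SampleContracted2}: with this choice of $I_t$, the Greedy-Algorithm test $u\in MWB((\M|_{I_{t(u)}\cup\{u\}})/A_{t(u)})$ is equivalent via Proposition \ref{prop:MWBisMWBofMWB} to $u\in MWB((\M/A_{t(u)})|_{S\cup\{u\}})$, which is precisely the definition of SAMPLE-CONTRACTED.

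There is no real obstacle beyond stitching: all the technical content, in particular the inductive span argument accounting for how each accepted element may kick at most one sample out of the reference set, has already been absorbed into Lemma \ref{lemma:SampleContracted1}. So the theorem's proof is effectively a one-line combination: SAMPLE-CONTRACTED is the instance of the Greedy Algorithm obtained by setting $I_t = MWB((\M/A_t)|_S)\cup A_t$.
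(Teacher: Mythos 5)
Your proposal matches the paper's approach exactly: the paper also presents the theorem as an immediate consequence of Lemma \ref{lemma:SampleContracted1} (verifying condition \ref{item:defGreedyAlgoIt} for the reference set $I_t = MWB((\M/A_t)|_S)\cup A_t$) and Lemma \ref{lemma:SampleContracted2} (showing the acceptance rule coincides with that of SAMPLE-CONTRACTED). Nothing further is needed.
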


Now we turn to proving that the other algorithm SAMPLE is \emph{not} a Greedy Algorithm.
\begin{theorem}
\label{thm:SampleNotGreedyAlgo}
SAMPLE is not a Greedy Algorithm
\end{theorem}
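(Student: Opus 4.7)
The plan is to exhibit a concrete counterexample: a graphic matroid, a sample set $S$, and an arrival order so that SAMPLE accepts some post-sampling element $u$, while every reference set $I_{t(u)}$ meeting the Greedy Algorithm's conditions is forced to be one for which the Greedy rule rejects $u$. The intuition is that SAMPLE tests membership of $u$ in $MWB(\M|_{S \cup \{u\}})$, i.e.\ in the uncontracted matroid, whereas the Greedy rule contracts by $A_t$. When contracting by a previously accepted element creates parallel relations between $u$ and a heavier element of $I_t$, the Greedy rule is forced to reject $u$ while SAMPLE still accepts.

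Concretely, I would take the graphic matroid of the graph on vertices $\{1,2,3,4\}$ with edges
\begin{equation*}
a = \{1,2\},\quad b = \{1,3\},\quad c = \{2,3\},\quad d = \{1,4\},\quad e = \{3,4\},
\end{equation*}
and assign strictly decreasing weights $v(a) > v(b) > v(c) > v(d) > v(e)$. I would fix arrival times so that $S = \{a, b\}$ and the post-sampling elements arrive in the order $c, d, e$. Tracing SAMPLE: $c$ is rejected because $\{a,b,c\}$ is a triangle and so $c \notin MWB(\M|_{\{a,b,c\}})$; $d$ is accepted because $\{a,b,d\}$ is a star and $d \in MWB(\M|_{\{a,b,d\}})$; and $e$ is accepted because $\{a,b,e\}$ is an acyclic path so $e \in MWB(\M|_{\{a,b,e\}})$, and $\{d,e\}$ is independent.

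Next, I would examine the Greedy framework at time $t(e)$. Here $A_{t(e)} = \{d\}$, $S = \{a,b\}$, and $V_{t(e)} = \{a,b,c,d\}$ has rank $3$. The requirements $\{d\} \subset I_{t(e)} \subset \{a,b,d\}$, $I_{t(e)}$ independent, and $V_{t(e)} \subset span(I_{t(e)})$ force $I_{t(e)} = \{a,b,d\}$; the smaller candidates $\{d\}$, $\{a,d\}$, $\{b,d\}$ have spans that miss at least one element of $V_{t(e)}$. With this unique $I_{t(e)}$, contracting by $\{d\}$ merges vertices $1$ and $4$, making $e = \{3,4\}$ parallel to $b = \{1,3\}$; since $v(b) > v(e)$, we get $e \notin MWB((\M|_{\{a,b,d,e\}})/\{d\})$, so the Greedy rule must reject $e$.

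The only non-routine step is verifying that $I_{t(e)} = \{a,b,d\}$ is uniquely forced, which is a small finite case check on the spans of the candidate sets in the graphic matroid. Once this uniqueness is established, the contradiction is immediate: SAMPLE accepts $e$ in the described execution, yet every Greedy Algorithm whose prior decisions agree with SAMPLE's is compelled by the constraints to pick $I_{t(e)} = \{a,b,d\}$ and then reject $e$. Hence SAMPLE cannot be realized as a Greedy Algorithm.
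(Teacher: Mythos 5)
Your proof is correct and takes essentially the same approach as the paper's: construct a concrete graphic matroid instance where, after an element is accepted, contracting by $A_t$ creates a parallel relation in the reference set that forces every Greedy Algorithm to reject a subsequent element that SAMPLE accepts. The paper achieves this more economically with a single triangle ($S = \{e_3\}$, then $e_2$ accepted, then contraction by $e_2$ makes $e_1$ parallel to $e_3$), but the underlying mechanism and proof structure are the same.
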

\begin{proof}
Consider the following undirected graph $G$: $V = \{v_1, v_2, v_3\}$, $E = \{ e_1 = (v_1, v_2), e_2 = (v_2, v_3), e_3 = (v_3, v_1) \}$ with the weights $v(e_i) = i$ for each $i$. Assume that the edge $e_3$ arrives during the sampling phase, and $e_2, e_1$ arrive after the sampling phase, in that order. Any instance of Greedy Algorithm, and the algorithm SAMPLE will accept $e_2$. But SAMPLE accepts the last element $e_1$, whereas a Greedy Algorithm will reject it. To see why a Greedy Algorithm should reject it, consider what $I_t$ will have to be when $t 
:= t(e_1)$.  Since $I_t$ should span $V_t$, we are required to store $I_t = V_t$. This means that
\begin{equation*}
    MWB((\M|_{I_t \cup \{ e_1 \}})/ A_t) = \{ e_3 \}
\end{equation*}
and the Greedy Algorithm will reject $e_1$.
\end{proof}

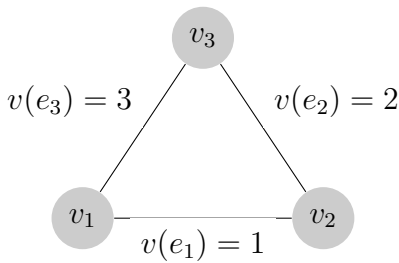
\begin{figure}[h]
\centering
    \begin{tikzpicture} [scale=.8,auto=left]
        \node (v1) at (5, 0)  [circle, fill=black!20] {$v_1$};
        \node (v2) at (9, 0)  [circle, fill=black!20] {$v_2$};
        \node (v3) at (7, 3)  [circle, fill=black!20] {$v_3$};
            
        \draw (v2) -- (v1) node [midway, fill=white] {$v(e_1) = 1$};
        \draw (v3) -- (v2) node [midway, fill=white] {$v(e_2) = 2$};
        \draw (v1) -- (v3) node [midway, fill=white] {$v(e_3) = 3$};
    \end{tikzpicture}
\caption{The undirected graph in the proof of Theorem \ref{thm:SampleNotGreedyAlgo}}
\end{figure}

\subsection{Virtual Algorithm for MSP}
In the previous section, we saw two algorithms that were inspired by Dynkin's algorithm, where they used the sample set $S$ as the reference set instead of $I_t$ in the definition of the Greedy Algorithms. In this section, we turn our attention to the Virtual Algorithm from Section \ref{section:MCSP}. We first formally define its generalized version for the MSP. We then show that it is \emph{not} a Greedy Algorithm. In the next two sections, we will analyze more in depth about how the generalized Virtual Algorithm performs on the Hat Graph and a modified version of the graph.

\begin{definition}
\label{algo:VirtualMSP}
The following algorithm is defined as the \emph{Virtual Algorithm for MSP} (or just \emph{Virtual Algorithm} if the meaning can be understood unambiguously) 
\begin{enumerate}
    \item Accept $u$ if and only if
    \begin{itemize}
        \item $A_t \cup \{ u \}$ is independent in $\M$
        \item $u \in MWB(V_t \cup \{ u \})$
        \item $MWB(V_t) \setminus MWB(V_t \cup \{ u \})$ is empty or contains exactly one element of $S$
    \end{itemize} 
\end{enumerate}
\end{definition}

It is easy to verify that the Virtual Algorithm for the MSP is equivalent to the Virtual Algorithm on $k$-uniform matroids. Now that we have formally defined the generalized version of the Virtual Algorithm, we prove that it is \emph{not} a Greedy Algorithm. We show this by first proving that all instances of Greedy Algorithm satisfy the property that they need to choose the top element seen so far as long as it is possible to do so, then proving that the Virtual Algorithm does not satisfy the property.

\begin{proposition}
\label{prop:chooseTopElement}
Let $\A$ be a Greedy Algorithm. For any element $u \in \U$, let $t = t(u)$ be the time it arrives. If $u$ has the largest weight of the elements in $V_t$, $\A$ will accept it if and only if $A_t \cup \{ u \}$ is independent in $\M$
\end{proposition}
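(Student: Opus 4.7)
The plan is to unpack the Greedy Algorithm's acceptance rule for $u$, namely $u \in MWB((\M|_{I_t \cup \{u\}})/A_t)$, and use the greedy characterization of a max-weight basis (Lemma \ref{lemma:GreedyAlgo}, parts 5 and 6). The key observation is that because $I_t \subset A_t \cup S \subset V_t$, every element other than $u$ in the ground set of the restricted/contracted matroid $(\M|_{I_t \cup \{u\}})/A_t$ belongs to $V_t$, and hence has strictly smaller weight than $u$. This forces $u$ to be the first element processed by the greedy algorithm computing the max-weight basis of $(\M|_{I_t \cup \{u\}})/A_t$, which collapses the acceptance condition into a pure feasibility check.

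For the forward direction, I will argue that if $\A$ accepts $u$, then $\{u\}$ is in particular independent in $(\M|_{I_t \cup \{u\}})/A_t$. By the definitions of contraction and restriction, this is equivalent to $A_t \cup \{u\}$ being independent in $\M|_{I_t \cup \{u\}}$, which since $A_t \cup \{u\} \subset I_t \cup \{u\}$ is in turn equivalent to $A_t \cup \{u\}$ being independent in $\M$. This direction is actually just the correctness property of any Greedy Algorithm, so it could also be stated as a one-line appeal to $A \in \I$.

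For the reverse direction, suppose $A_t \cup \{u\} \in \I$. Order the elements of $(I_t \cup \{u\}) \setminus A_t$ in decreasing weight. Since $I_t \setminus A_t \subset S \subset V_t$ and $u$ is heavier than every element of $V_t$, $u$ is the first element in this order. Running the greedy algorithm (Algorithm \ref{algo:Greedy}) on the contracted matroid $(\M|_{I_t \cup \{u\}})/A_t$, the first step accepts $u$ iff $\{u\}$ is independent there; by the same chain of equivalences as above this is precisely the assumption $A_t \cup \{u\} \in \I$. Therefore $u \in MWB((\M|_{I_t \cup \{u\}})/A_t)$ and $\A$ accepts $u$.

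The proof is largely a bookkeeping exercise; the only place where care is needed is in translating independence in $(\M|_{I_t \cup \{u\}})/A_t$ back to independence in $\M$, and in verifying that $u$ truly sits at the top of the weight order inside the contracted ground set. The latter is where the hypothesis $A_t \subset I_t \subset A_t \cup S$ from the Greedy Algorithm definition does the real work, because it prevents any late-arriving (non-sample) element from being present in the contracted ground set alongside $u$.
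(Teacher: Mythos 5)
Your proof is correct and follows essentially the same route as the paper's: both directions unpack the Greedy Algorithm acceptance rule $u \in MWB((\M|_{I_t \cup \{u\}})/A_t)$, use $I_t \subset A_t \cup S \subset V_t$ to place $u$ at the top of the weight order in the contracted ground set, and reduce the max-weight-basis membership test to a pure feasibility check via the greedy characterization (Lemma \ref{lemma:GreedyAlgo}) and downward-closedness. Your version is a bit more explicit in unwinding independence through the restriction and contraction, and your aside that the forward direction follows from the correctness property alone is a valid shortcut the paper does not take, but the substance is identical.
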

\begin{proof}
Assume $A_t \cup \{ u \}$ is independent in $\M$. Then $\{ u \}$ is independent in $\M / A_t$. Since $u$ has the largest weight out of all elements in $V_t$, in particular, it has the largest weight out of all elements in $I_t \cup \{ u \}$. Since a greedy algorithm is guaranteed to find the max-weight basis of a matroid, we easily observe that $u$ is in the max-weight basis of $I_t \cup \{ u \}$ in the contracted matroid $\M / A_t$. Then $\A$ will accept $u$.

On the other hand, assume $\A$ accepted $u$. Then we know that $u$ is in the max-weight basis of $I_t \cup \{ u \}$ in $\M / A_t$. Then by downward-closedness of a matroid, $\{ u \}$ is independent in the same contracted matroid. This shows that $A_t \cup \{ u \}$ is independent in $\M$. 
\end{proof}

\begin{lemma}
\label{lemma:VirtualChooseTopElement}
Virtual Algorithm does not satisfy the property in Proposition \ref{prop:chooseTopElement}. 
\end{lemma}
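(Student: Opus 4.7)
The plan is to exhibit a small graphic matroid and arrival schedule on which the Virtual Algorithm rejects an element $u$ that is the heaviest seen so far, despite $A_{t(u)} \cup \{u\}$ being independent. The only way Virtual can reject such a $u$ is through its third condition: $u$ must kick some element out of the running max-weight basis, and the excluded element must fail to be a sample. So I will construct a case where a high-weight $u$ closes a cycle built mostly of accepted, non-sample edges, with at most one sample heavy enough to survive the greedy comparison.

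Concretely, I will use the $4$-cycle $C_4$ on vertices $v_1, v_2, v_3, v_4$ with edges $e_1 = v_1 v_2$, $e_2 = v_2 v_3$, $e_3 = v_3 v_4$, $e_4 = v_4 v_1$ and weights $v(e_i) = i$. The schedule will let only $e_2$ arrive during the sampling phase (so $S = \{e_2\}$), followed by $e_1$, $e_3$, $e_4$ in that order after the threshold time.

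I will then trace the algorithm: when $e_1$ arrives it is the new top, extends $A_t$ into a forest, and does not remove anything from $MWB(V_t)$, so Virtual accepts it; the same holds for $e_3$. Hence just before $e_4$ arrives we have $A_{t(e_4)} = \{e_1, e_3\}$ and $V_{t(e_4)} = \{e_1, e_2, e_3\}$. Now $e_4$ is still the heaviest element in $V_{t(e_4)} \cup \{e_4\}$, and $\{e_1, e_3, e_4\}$ forms the path $v_2 v_1 v_4 v_3$, which is independent, so the hypothesis of Proposition \ref{prop:chooseTopElement} is met. However, $V_{t(e_4)} \cup \{e_4\}$ is the whole $4$-cycle, giving $MWB(V_{t(e_4)} \cup \{e_4\}) = \{e_2, e_3, e_4\}$ while $MWB(V_{t(e_4)}) = \{e_1, e_2, e_3\}$; the excluded element is $e_1 \notin S$, so Virtual's third condition fails and it rejects $e_4$, contradicting the conclusion of the proposition.

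The main obstacle is the design choice that forces the kicked-out element to be a non-sample. The cycle closed by $u$ must contain at most one sample edge, and that sample edge must be heavy enough to survive the greedy comparison against $u$ and the accepted edges adjacent to $u$. Placing the middle-weight edge $e_2$ in $S$ accomplishes exactly this: $e_2$ remains in $MWB(V_{t(e_4)} \cup \{e_4\})$ and pushes out the lightest, non-sample, accepted edge $e_1$ instead, which is the precise configuration that separates Virtual's behavior from that of any Greedy Algorithm.
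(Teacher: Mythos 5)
Your proof is correct and takes essentially the same approach as the paper: construct a small matroid and schedule where the Virtual Algorithm rejects a feasible top-weight element because the element it would kick out of $MWB(V_t)$ is not a sample --- the paper uses a $2$-uniform matroid on $\{1,\ldots,6\}$ with $S=\{1,3\}$ and stream $2,4,5,6$, while you use the graphic matroid of $C_4$ with $S=\{e_2\}$. One minor slip: $e_1$ is not ``the new top'' when it arrives (it has weight $1 < v(e_2)=2$), but it is still accepted since all three Virtual conditions hold, so your trace and conclusion remain valid.
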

\begin{proof}
Consider a 2-uniform matroid $\M = (\U, \I)$ on the universe $\U = \{ 1, 2, 3, 4, 5, 6 \} $, where each element of $\U$ is given the weight equal to itself. Assume that the stream of input is given as $(1, 3, 2, 4, 5, 6)$ and that the Virtual Algorithm put aside $S = \{ 1, 3 \}$ as the sample initially. Upon receiving the input $2$, it will accept $2$ and maintain $\{ 2, 3 \}$ as the set of top two elements. Then when the element $4$ arrives, the algorithm rejects it because the element it kicks out is 2 and is not from the sample $S$. But notice that $4$ is the top element seen so far and that $\{ 2, 4 \}$ is independent in $\M$.
\end{proof}

By combining the results of Proposition \ref{prop:chooseTopElement} and Lemma \ref{lemma:VirtualChooseTopElement}, we get the following result.
\begin{theorem}
Virtual Algorithm is not a Greedy Algorithm
\end{theorem}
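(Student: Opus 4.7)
The proof is a one-line contradiction that plugs the preceding Proposition and Lemma together, so my plan is really just to articulate that chain cleanly and point out why each piece is actually doing work.

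First I would set up a proof by contradiction: suppose, for the sake of contradiction, that the Virtual Algorithm (Definition \ref{algo:VirtualMSP}) fits the definition of a Greedy Algorithm (Definition \ref{def:GreedyAlgoMSP}) under some valid choice of the reference sets $I_t$. Under this hypothesis, Proposition \ref{prop:chooseTopElement} applies to the Virtual Algorithm, so it must accept every arriving element $u$ whose weight is the largest among $V_t$ as long as $A_t \cup \{u\}$ is independent in $\M$.

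Next I would invoke Lemma \ref{lemma:VirtualChooseTopElement}, which exhibits an explicit execution on a $2$-uniform matroid (input stream $(1,3,2,4,5,6)$ with sample $S = \{1,3\}$) where the Virtual Algorithm rejects the element $4$, even though $4$ has the largest weight seen so far at its arrival time and $A_t \cup \{4\} = \{2,4\}$ is independent in the $2$-uniform matroid. This directly contradicts the consequence derived from the hypothesis in the previous paragraph.

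Concluding: the assumption that the Virtual Algorithm is a Greedy Algorithm forces it to satisfy the top-element acceptance property, but the Virtual Algorithm demonstrably fails that property on a concrete instance. The hypothesis must therefore be false, and the Virtual Algorithm is not a Greedy Algorithm. There is no real obstacle here — the substantive mathematics is entirely in Proposition \ref{prop:chooseTopElement} and Lemma \ref{lemma:VirtualChooseTopElement}; the only care needed in writing the proof is to state the contrapositive framing precisely so that the quantifier ``for every valid choice of $I_t$'' is handled correctly, since Proposition \ref{prop:chooseTopElement} is a universal statement about all Greedy Algorithms and thus rules out the existence of any $I_t$-certificate making the Virtual Algorithm fit Definition \ref{def:GreedyAlgoMSP}.
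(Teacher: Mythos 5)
Your proposal is correct and follows exactly the same route as the paper, which simply states that the theorem follows by combining Proposition \ref{prop:chooseTopElement} with Lemma \ref{lemma:VirtualChooseTopElement}; you have merely made the contrapositive structure explicit. No gap or deviation to report.
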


\subsection{Virtual Algorithm on Hat Graph}
We previously proved that the Virtual Algorithm is not an instance of Greedy Algorithm. In this section, we show that the Virtual Algorithm obtains a constant probability competitive ratio on the Hat Graph, unlike any instance of Greedy Algorithm. This fact separates the Virtual Algorithm from the class of Greedy Algorithms in a non-trivial manner.

The proof of the theorem will be presented in two main parts. The first part will show that the edge $e_\infty$ is accepted if $t_1, b_1$, the two edges from the leftmost claw were sampled. This part, by itself, shows that $e_\infty$ is accepted with constant probability. If the weight of the $e_\infty$ is large enough that the weights of all other edges are negligible, this part is enough to show that the algorithm obtains a constant utility competitive ratio. The second part applies the Forbidden Set argument to formally prove that the algorithm obtains a probability competitive ratio of 1/4 on the graph.

\subsubsection{Claw Blocker Argument}
In this part, we will prove that if $t_1, b_1$ are sampled, then the Virtual Algorithm has to accept $e_\infty$. First, we will show that if both edges from the leftmost claw are sampled, the Virtual Algorithm will accept $e_\infty$ as long as it is possible to do so (i.e., $A_{t(e_\infty)} \cup \{e_\infty \}$ is independent). Second, we will show that under the same assumption, the Virtual Algorithm will not accept both edges from any other claw (i.e., it ``blocks'' any other claw from being accepted), which is the only way that $A_{t(e_{\infty})} \cup \{ e_\infty \}$ could be dependent. Throughout the proof, the assumption that $e_\infty$ is not sampled may be implied and not explicitly stated, since it is trivial that if it is sampled, it will not be accepted. 

\begin{lemma}
\label{lemma:VirtualHatGraphConstant1}
If $t_1, b_1 \in S$, the Virtual Algorithm will accept $e_\infty$ as long as it is possible to do so.
\end{lemma}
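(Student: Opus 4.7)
The plan is to directly verify the three acceptance conditions of the Virtual Algorithm (Definition \ref{algo:VirtualMSP}) at time $t = t(e_\infty)$. Condition 1 (that $A_t \cup \{e_\infty\}$ is independent) is precisely the hypothesis ``possible to do so''. Condition 2 ($e_\infty \in MWB(V_t \cup \{e_\infty\})$) is immediate: $e_\infty$ strictly dominates every other edge in weight, so the greedy algorithm (Lemma \ref{lemma:GreedyAlgo}) always keeps it in the max-weight basis. All the work therefore lies in Condition 3, which demands that $MWB(V_t) \setminus MWB(V_t \cup \{e_\infty\})$ be empty or a single element of $S$.

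The central step is to trace the greedy algorithm on $V_t$. Processing edges in decreasing weight, every top edge of $V_t$ is added since top edges share $v_t$ and reach distinct spokes, creating no cycles. In particular, as $t_1 \in S \subset V_t$, after this stage $v_1$ lies in $v_t$'s component while $v_b$ is still isolated. The heaviest bottom edge in $V_t$ is $b_1$ (using $b_1 \in S$), and since $v_1$ sits in $v_t$'s component but $v_b$ is a singleton, $b_1$ is added, gluing $v_b$ into $v_t$'s component. Every subsequent $b_i$ with $t_i \in MWB(V_t)$ then closes a cycle and is skipped, while every $b_i$ with $t_i \notin V_t$ is hung off $v_b$ as a leaf. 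Because $MWB(V_t)$ is a forest, these facts pin down the unique $v_t$--$v_b$ path in $MWB(V_t)$ as the two-edge path $t_1, b_1$.

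Introducing $e_\infty$ then creates the fundamental triangle $e_\infty, t_1, b_1$, and by the weight ordering $v(e_\infty) \gg v(t_1) > v(b_1)$, the edge $b_1$ is strictly the lightest on this cycle. Re-running the greedy algorithm on $V_t \cup \{e_\infty\}$ therefore retains every element of $MWB(V_t)$ except $b_1$ and adds $e_\infty$, so $MWB(V_t) \setminus MWB(V_t \cup \{e_\infty\}) = \{b_1\}$ (with uniqueness of the kicked-out element further guaranteed by Lemma \ref{lemma:MWBKickOutOne}). Since $b_1 \in S$, Condition 3 holds and $e_\infty$ is accepted. The main potential obstacle is the claim that the path in $MWB(V_t)$ really is the triangle's base $t_1, b_1$ rather than routing through some other spoke; this is exactly what the greedy trace in the previous paragraph establishes, and once it is in hand, the weight ordering and Lemma \ref{lemma:MWBKickOutOne} close out the argument.
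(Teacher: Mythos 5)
Your proof is correct and follows essentially the same route as the paper: both trace the greedy algorithm to establish that the unique $v_t$--$v_b$ path in $MWB(V_t)$ is $\{t_1, b_1\}$ (using $t_1, b_1 \in S$), then observe that adjoining $e_\infty$ creates the triangle $\{e_\infty, t_1, b_1\}$ in which $b_1$ is strictly lightest and is therefore the sole element kicked out, which belongs to $S$. You are somewhat more systematic in explicitly checking all three acceptance conditions of Definition \ref{algo:VirtualMSP} (the paper leaves Conditions 1 and 2 implicit), but the key argument is identical.
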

\begin{proof}
Denote $t(e_\infty)$ as $t$. Consider the max-weight basis of $V_t$. Following the steps of a greedy algorithm, it is clear that $MWB(V_t)$ will include all top edges of $V_t$, and all bottom edges of the claws where the top edges are not in $V_t$, and the single leftmost bottom edge from any claw where both top and bottom edges are in $V_t$. That is
\begin{equation*}
MWB(V_t) = \{ t_i \; \vert \; t_i \in V_t \} \cup \{ b_i : i = \min \{ i \; \vert \; t_i, b_i \in V_t  \} \} \cup \{ b_i \in V_t : t_i \not \in V_t \}
\end{equation*}
Since we assumed that $t_1, b_1 \in S \subset V_t$, we see that $\min \{ i \; \vert \; t_i, b_i \in V_t  \} = 1$. Now consider the max-weight basis of $V_t \cup \{ e_\infty \}$. By Lemma \ref{lemma:MWBInSubset}, this is equal to $MWB(MWB(V_t) \cup \{ e_\infty \} )$. Since $\{ e_\infty, t_1, b_1 \}$ form a cycle, we know that at least one of these three elements need to be kicked out of the max-weight basis. It is clear that $e_\infty, t_1$ are both in the max-weight basis because they are the first two elements to be examined by the greedy algorithm and do not form a cycle with any other elements until $b_1$ is processed. Therefore, we see that $b_1 \in S$ is the element being kicked out by the greedy algorithm. Since the element that was kicked out was a sample, the Virtual Algorithm will accept $e_\infty$
\end{proof}

\begin{lemma}
\label{lemma:VirtualHatGraphConstant2}
If $t_1, b_1 \in S$, it is impossible to have $t_i, b_i \in A_{t(e_\infty)}$ for any $i \in [n]$
\end{lemma}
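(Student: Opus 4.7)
The plan is to treat the case $i = 1$ trivially (since $t_1, b_1 \in S$ are rejected in the sampling phase and therefore cannot lie in $A$), and to argue the case $i \geq 2$ by contradiction, splitting on the order of arrival of $t_i$ and $b_i$. The central structural observation is that the edges $\{t_1, b_1, t_i, b_i\}$ form a 4-cycle through the vertices $v_t, v_1, v_b, v_i$, so once $t_1, b_1$ sit in $MWB(V_\tau)$---which they do at every time $\tau > p$, by the formula from Lemma \ref{lemma:VirtualHatGraphConstant1} together with $\{t_1, b_1\} \subset S$---no max-weight basis of any superset of $V_\tau$ can contain both $t_i$ and $b_i$ alongside them.

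Suppose for contradiction that $i \geq 2$ and $t_i, b_i \in A_{t(e_\infty)}$, and first consider the subcase $t(t_i) < t(b_i)$. Setting $\tau = t(b_i)$, the formula gives $\{t_i, t_1, b_1\} \subset MWB(V_\tau)$. When the greedy algorithm runs on $V_\tau \cup \{b_i\}$ in decreasing weight order and reaches $b_i$ (which is strictly lighter than $b_1$, since $i \geq 2$), the vertices $v_b$ and $v_i$ already lie in a common component via the subpath through $b_1, t_1, t_i$, so $b_i$ closes a cycle and does not enter $MWB(V_\tau \cup \{b_i\})$. This violates condition (2) of the Virtual Algorithm, a contradiction.

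Next consider the subcase $t(b_i) < t(t_i)$ and set $\tau = t(t_i)$. Since $b_i \in A_\tau$, in particular $b_i \in V_\tau \setminus S$ (any accepted element must arrive strictly after $p$). Applying the formula to $V_\tau$ and to $V_\tau \cup \{t_i\}$: in $V_\tau$, because $t_i \notin V_\tau$ and $i \geq 2$, the edge $b_i$ qualifies as a ``$b_j \in V_\tau$ with $t_j \notin V_\tau$,'' so $b_i \in MWB(V_\tau)$; in $V_\tau \cup \{t_i\}$, the new edge $t_i$ joins the basis and the 4-cycle $\{t_1, b_1, t_i, b_i\}$ forces $b_i$ out. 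Hence $MWB(V_\tau) \setminus MWB(V_\tau \cup \{t_i\}) = \{b_i\}$, a single non-sample element, violating condition (3) of the Virtual Algorithm.

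The hard part is simply the bookkeeping around the $MWB$ formula at the two relevant times: once one observes that $\{t_1, b_1\} \subset S$ pins down $b_1$ as the unique ``closing'' bottom edge (the $i^* = 1$ branch of the description in Lemma \ref{lemma:VirtualHatGraphConstant1}), identifying the single kicked-out element in each subcase, and recognizing why it is not a sample, is routine.
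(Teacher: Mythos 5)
Your proof is correct and follows essentially the same route as the paper's: handle $i=1$ trivially, split on the arrival order of $t_i$ and $b_i$, and show that in the first subcase $b_i$ fails the membership test $u \in MWB(V_t \cup \{u\})$ while in the second subcase $t_i$ kicks out the non-sample $b_i$, violating the sample-only-kick-out condition. You spell out the bookkeeping via the explicit $MWB(V_\tau)$ formula more carefully than the paper does, but the underlying argument is identical.
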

\begin{proof}
The statement is obviously true for $i = 1$, so assume $i \not = 1$. First assume that $b_i$ arrives after $t_i$ was accepted. Since $\min \{ i \; \vert \; t_i, b_i \in V_{t(b_i)}  \} = 1$, we notice that $b_i \not \in MWB(V_{t(b_i)} \cup \{ b_i \})$. Therefore, the virtual algorithm will reject $b_i$. Now assume that $t_i$ arrives after $b_i$ was accepted. Then notice that $t_1, b_1, b_i \in MWB(V_{t(t_i)})$. However, after we add $t_i$ to $V_{t(t_i)}$, notice that $b_i$ has to be kicked out of the max-weight basis. Since $b_i \not \in S$, the Virtual Algorithm will reject $t_i$. 
\end{proof}
We are now ready to prove the first part of the main theorem.
\begin{proposition}
\label{prop:VirtualHatGraphConstant}
Virtual Algorithm accepts $e_\infty$ with constant probability
\end{proposition}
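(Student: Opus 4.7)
The plan is to reduce the statement to the event that both edges of the leftmost claw are sampled while $e_\infty$ itself is not, and then invoke Lemmas~\ref{lemma:VirtualHatGraphConstant1} and~\ref{lemma:VirtualHatGraphConstant2} directly. Define the event
\[
E \;=\; \{\, t_1 \in S,\; b_1 \in S,\; e_\infty \notin S \,\}.
\]
Since the arrival times $t(u)$ are drawn independently and uniformly from $[0,1]$, membership in $S = V_p$ is an independent Bernoulli$(p)$ event for each element, and so $\Pr[E] = p^2(1-p)$, a positive constant that does not depend on $n$.

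Next, I would argue that on the event $E$ the Virtual Algorithm is forced to accept $e_\infty$. In the graphic matroid of the Hat Graph, the only cycles containing $e_\infty$ are the triangles $\{e_\infty, t_i, b_i\}$, so $A_{t(e_\infty)} \cup \{ e_\infty \}$ is independent in $\M$ if and only if no complete claw $\{t_i, b_i\}$ is contained in $A_{t(e_\infty)}$. Under the event $E$, Lemma~\ref{lemma:VirtualHatGraphConstant2} guarantees exactly this, so $A_{t(e_\infty)} \cup \{ e_\infty \}$ is independent. Lemma~\ref{lemma:VirtualHatGraphConstant1} then kicks in and says that the Virtual Algorithm accepts $e_\infty$.

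Combining the two steps gives
\[
\Pr[e_\infty \in A] \;\geq\; \Pr[E] \;=\; p^2(1-p),
\]
which is a positive constant independent of $n$ (and, if one cares, optimizing over $p$ yields a lower bound of $4/27$ at $p = 2/3$). There is no real obstacle to overcome here, since Lemmas~\ref{lemma:VirtualHatGraphConstant1} and~\ref{lemma:VirtualHatGraphConstant2} do all the heavy lifting; the only thing to confirm explicitly in this proof is the cycle characterization identifying ``$A_{t(e_\infty)} \cup \{e_\infty\}$ is independent'' with ``no claw lies inside $A_{t(e_\infty)}$'', which is immediate from the fact that in a graphic matroid a set of edges is independent iff it is acyclic, together with the observation that $e_\infty$ is not part of any cycle of the Hat Graph other than those supplied by a claw.
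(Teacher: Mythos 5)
Your proposal is correct and takes essentially the same route as the paper: condition on the event $\{t_1, b_1 \in S, e_\infty \notin S\}$, invoke Lemma~\ref{lemma:VirtualHatGraphConstant2} to guarantee feasibility of accepting $e_\infty$ and Lemma~\ref{lemma:VirtualHatGraphConstant1} to guarantee acceptance, and conclude $\Pr[e_\infty \in A] \geq p^2(1-p)$. The only (welcome) difference is that you spell out explicitly why Lemma~\ref{lemma:VirtualHatGraphConstant2} implies feasibility, namely that the only cycles of the Hat Graph through $e_\infty$ are the triangles $\{e_\infty, t_i, b_i\}$, a step the paper leaves implicit.
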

\begin{proof}
If $t_1, b_1 \in S$, then Lemma \ref{lemma:VirtualHatGraphConstant2} shows that the the leftmost claw blocks any other claw from being accepted. Therefore, when $e_\infty$ arrives, it will be possible to add it to the accepted elements. Lemma $\ref{lemma:VirtualHatGraphConstant1}$ now guarantees that the Virtual Algorithm will indeed accept it. Then, the probability that $e_\infty$ is accepted is lower bounded by the probability that $t_1, b_1 \in S$
\begin{align*}
    \Pr[e_{\infty} \in A_1] 
    &= \Pr[e_\infty \not \in S] \cdot \Pr[e_\infty \in A_1 \; \vert \; e_\infty \not \in S] \\
    &\geq \Pr[e_\infty \not \in S] \cdot \Pr[t_1, b_1 \in S \; \vert \; e_\infty \not \in S] \\
    &= p^2(1 - p)
\end{align*}
where $p$ is the sample probability that was fixed before the algorithm started.
\end{proof}

\subsubsection{Forbidden Sets Argument}
\label{section:VirtualHatForbiddenSets}
The Forbidden Sets property requires that we define a set $\mathcal{F}(X, Y, u)$ for every tuple $(X, Y, u)$ such that $u \in MWB(Y)$ and $X \subset Y \setminus \{ u \}$. But the Forbidden Sets we will use in the following proof satisfies the additional property that $\mathcal{F}(X, Y, u) = \mathcal{F}(X', Y, u)$ for any $X, X' \subset Y \setminus \{ u \}$. Therefore, for the purpose of this proof, we will abuse the notation and denote $\mathcal{F}(*, Y, u)$ as $\mathcal{F}(Y, u)$.

This additional property simplifies the conceptual meaning of a forbidden set. Now we are defining a set $\F(Y, u)$ such that $u$ will be accepted if it arrives at time $t$ as long as $V_t = Y$ and any element $u' \in \F(Y, u)$ was seen during the sampling phase. 
\begin{proposition}
\label{prop:VirtualAlgoForbiddenSets}
Virtual Algorithm has Forbidden Sets of size 2 for the Hat Graph
\end{proposition}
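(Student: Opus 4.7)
The plan is to exhibit, for every pair $(Y, u)$ with $Y \subset E$ and $u \in MWB(Y)$, a forbidden set $\F(Y, u) \subset Y \setminus \{u\}$ of size at most $2$ such that whenever $u$ arrives at time $t > p$ with $V_t = Y \setminus \{u\}$ and $\F(Y, u) \subset S$, all three acceptance conditions of Definition \ref{algo:VirtualMSP} are met. The construction branches on the identity of $u$ (either $e_\infty$, a top edge $t_i$, or a bottom edge $b_i$) and on whether $e_\infty \in Y$.

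For $u = e_\infty$, define $i^* = \min\{i : t_i, b_i \in Y\}$ when the minimum exists and set $\F(Y, e_\infty) = \{t_{i^*}, b_{i^*}\}$; otherwise $\F(Y, e_\infty) = \emptyset$. This is essentially the situation already handled in the previous section. Generalizing Lemma \ref{lemma:VirtualHatGraphConstant2} from the leftmost index $1$ to an arbitrary leftmost full-claw index, sampling $\{t_{i^*}, b_{i^*}\}$ blocks every other claw from being fully accepted into $A_t$, yielding condition (a); condition (b) is automatic since $e_\infty$ sits at the top of the weight order; and a short computation using Proposition \ref{prop:MWBisMWBofMWB} identifies $b_{i^*}$ as the unique element kicked out of the max-weight basis upon $e_\infty$'s arrival, giving condition (c).

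For $u \neq e_\infty$, the construction is driven by the fact that the only cycles through $u$ in the Hat Graph are the triangle $\{e_\infty, t_i, b_i\}$ and the $4$-cycles $\{t_i, b_i, t_j, b_j\}$. If $u = t_i$ and $e_\infty \in Y$, setting $\F(Y, t_i) = \{b_i\}$ when $b_i \in Y$ (else $\emptyset$) simultaneously keeps $b_i$ out of $A_t$—closing every cycle through $t_i$—and supplies the single kicked element $b_i$. If $u = t_i$ and $e_\infty \notin Y$, letting $j^* = \min\{j : t_j, b_j \in V_t\}$ when defined, a direct calculation shows the kicked element is either nothing, $b_i$ (when $i > j^*$), or $b_{j^*}$ (when $i < j^*$), so $\F(Y, t_i) = \{b_i, b_{j^*}\} \cap Y$ suffices in every sub-case. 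A fully analogous analysis handles $u = b_i$: when $e_\infty \in Y$ or $t_i \notin Y$ one can take $\F = \emptyset$, and the only non-trivial sub-case is $e_\infty \notin Y$ with $t_i \in Y$ and $b_i$ serving as the new bridge of $Y$, for which $\F(Y, b_i) = \{t_i, b_{j^*}\}$ simultaneously blocks the cycle and handles the bridge swap.

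The main obstacle is condition (a) of Definition \ref{algo:VirtualMSP}: $A_t \cup \{u\}$ must be independent, yet $A_t$ depends on the entire algorithmic history rather than on $V_t$ alone. In the Hat Graph, however, any cycle through $u$ in $A_t$ must contain a full claw $\{t_j, b_j\} \subset A_t$, so the issue reduces to preventing any claw from ever being fully accepted. The key structural insight—the generalization of Lemma \ref{lemma:VirtualHatGraphConstant2} alluded to above—is that a single sampled claw $\{t_{i^*}, b_{i^*}\}$ forces the Virtual Algorithm to reject the second edge of every other claw, so $A_t$ is automatically claw-free. This turns what would otherwise be a global history-dependent invariant into a local check on at most two sampled edges, which is precisely what size-$2$ forbidden sets afford.
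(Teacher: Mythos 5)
Your construction follows the same case decomposition as the paper (branching on whether $u$ is $e_\infty$, a top edge $t_i$, or a bottom edge $b_i$, and on whether $e_\infty \in Y$), and it is correct. In fact it repairs a genuine gap in the paper's own proof. The paper's write-up only verifies condition (c) of Definition~\ref{algo:VirtualMSP} (that the kicked-out element lies in $S$), but never condition (a), that $A_{t(u)} \cup \{u\}$ is independent. As a result, in the case $u = t_i$ with $e_\infty \notin Y$ and $i = \min\{j : t_j, b_j \in Y\}$, the paper's forbidden set is $\{b_{j^*}\}$ alone (where $j^*$ is the second-leftmost full claw) and omits $b_i$; symmetrically, for $u = b_i$ with $i$ leftmost, it omits $t_i$. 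This omission matters: take the Hat Graph with $n = 3$, sample $S = \{b_2\}$, and let $b_1, t_3, b_3, t_2, t_1$ arrive post-sample in that order (with $e_\infty$ last). The Virtual Algorithm accepts $b_1, t_3, b_3$ and rejects $t_2$ (it kicks out $b_3 \notin S$), so when $t_1$ arrives the paper's $\F(Y, t_1) = \{b_2\}$ consists entirely of sampled elements, yet $A_{t(t_1)} \cup \{t_1\} = \{b_1, t_3, b_3, t_1\}$ contains a cycle and $t_1$ is rejected, contradicting the Forbidden Sets property. Your $\F(Y, t_1) = \{b_1, b_2\}$ avoids this because $b_1 \in V_{t(t_1)} \setminus S$. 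Your added elements $b_i$ (resp.\ $t_i$) are exactly what is needed, since every cycle through $t_i$ (resp.\ $b_i$) in the Hat Graph passes through $b_i$ (resp.\ $t_i$), so forbidding it guarantees $A_t \cup \{u\}$ remains independent. Your replacement of the hard-coded $\{t_1, b_1\}$ with $\{t_{i^*}, b_{i^*}\}$ for $u = e_\infty$ is likewise the right fix, since $\{t_1, b_1\}$ need not even be a subset of $Y$. All of your sets still have size at most two, so Theorem~\ref{thm:ForbiddenSetsConstant} applies unchanged and the proposition stands.
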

\begin{proof}
There are five cases we will consider based on the edge $u$ and the set $Y$:
\begin{equation*}
    \F(Y, u) = 
    \begin{cases}
        \{ t_1, b_1 \} & u = e_\infty \\
        \{ b_j \; \vert \; j = \min \{ j > i \; \vert \; t_j, b_j \in Y  \} \} \} & u = t_i, Y \not \ni e_\infty, i = \min \{ i \; \vert \; t_i, b_i \in Y  \} \} \\
        \{ b_i \} & u = t_i, otherwise \\
        \{ b_j \; \vert \; j = \min \{ j > i \; \vert \; t_j, b_j \in Y  \} \} \} & u = b_i, Y \not \ni e_\infty, i = \min \{ i \; \vert \; t_i, b_i \in Y  \} \} \\
        \emptyset & u = b_i, Y \not \ni t_i
    \end{cases}
\end{equation*}
First consider the case where $u = e_\infty$. Proposition \ref{prop:VirtualHatGraphConstant} shows that if $t_1, b_1 \in S$, then $e_\infty$ will be accepted by the Virtual Algorithm. This shows that $t_1, b_1$ are forbidden elements for $e_\infty$. 

Next consider the two cases where $u = t_i$. First, if adding $u$ makes the $i$-th claw the leftmost path between $v_t$ and $v_b$, which is precisely when $e_\infty \not \in V_{t(u)} \cup \{ u \}$ and $i = \min \{ i \; \vert \; t_i, b_i \in V_{t(u)} \cup \{ u \}  \} \}$, then we need to kick out the bottom edge of the previously leftmost claw in $V_{t(u)}$, if it exists. In any other case where $u = t_i$, $u$ doesn't need to kick out any element, or if it does, it will kick out $b_i$.

Next consider the two cases where $u = b_i$. The first is when adding $u$ makes the $i$-th claw the leftmost path between $v_t$ and $v_b$, which is precisely when $e_\infty \not \in V_{t(u)} \cup \{ u \}$ and $i = \min \{ i \; \vert \; t_i, b_i \in V_{t(u)} \cup \{ u \}  \} \}$. With the same logic from the previous case with $u = t_i$, we need to kick out the bottom edge of the previously leftmost claw in $V_{t(u)}$, if it exists. Otherwise, adding $u$ does not create a cycle, and $u$ does not need to kick out an element.
\end{proof}

By Theorem \ref{thm:ForbiddenSetsConstant} and Proposition \ref{prop:VirtualAlgoForbiddenSets}, we have the following result.
\begin{theorem}
\label{thm:VirtualHatGraphConstant}
Virtual Algorithm obtains a constant probability competitive ratio of 1/4 on the Hat Graph.
\end{theorem}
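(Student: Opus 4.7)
The plan is to simply combine the two cited results into a one-line invocation. Proposition \ref{prop:VirtualAlgoForbiddenSets} has already established that the Virtual Algorithm admits Forbidden Sets of size $k = 2$ on the Hat Graph (only the $u = e_\infty$ case uses the full allowance of two elements, namely $\{t_1, b_1\}$; each of the remaining four cases contributes a singleton or the empty set). Theorem \ref{thm:ForbiddenSetsConstant}, applied to an algorithm with Forbidden Sets of size $k$, yields probability competitive ratio $\alpha(k) = k^{-k/(k-1)}$ with sampling probability $p(k) = k^{-1/(k-1)}$ whenever $k \geq 2$.

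With $k = 2$ the formula evaluates to $\alpha(2) = 2^{-2} = 1/4$ and $p(2) = 1/2$, so the proof reduces to invoking Theorem \ref{thm:ForbiddenSetsConstant} with the forbidden-set structure supplied by Proposition \ref{prop:VirtualAlgoForbiddenSets}, choosing the sampling probability $p = 1/2$, and reading off the resulting bound.

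There is essentially no obstacle left: all of the combinatorial work lives in the case analysis of Proposition \ref{prop:VirtualAlgoForbiddenSets}, and the probabilistic argument that converts a forbidden-set bound into a competitive ratio is imported wholesale from Theorem \ref{thm:ForbiddenSetsConstant}. The only point worth flagging is a modeling remark: the value $p = 1/2$ demanded by the theorem is larger than the $p = 1/e$ one might carry over from Dynkin-style intuition (and larger than the $p$ implicit in the Claw Blocker bound $p^2(1-p)$ of Proposition \ref{prop:VirtualHatGraphConstant}), so the $1/4$ guarantee is specifically tied to tuning the sampling phase to the optimum prescribed by Theorem \ref{thm:ForbiddenSetsConstant}.
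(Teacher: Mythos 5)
Your proposal is correct and is essentially identical to the paper's argument: the paper also derives Theorem \ref{thm:VirtualHatGraphConstant} as an immediate corollary of Theorem \ref{thm:ForbiddenSetsConstant} and Proposition \ref{prop:VirtualAlgoForbiddenSets} with $k=2$. Your side remark that the guarantee presumes tuning the sampling probability to $p(2)=1/2$ rather than $1/e$ is a valid and helpful observation, though the paper leaves it implicit.
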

\subsection{Virtual Algorithm on Modified Hat Graph}
In the previous section, we showed that the Virtual Algorithm obtains a constant probability competitive ratio on the Hat Graph. This result may lead us to believe that the Virtual Algorithm is able to obtain a constant competitive ratio for all graphic matroids, if not all matroids. However, in this section, we show that when a slight modification is introduced to the Hat Graph, the Virtual Algorithm fails to even obtain a constant utility competitive ratio.

\begin{definition}
\label{def:ModifiedHatGraph}
The \emph{Modified Hat Graph} is a graph $G = (V, E)$ with $2n + 2$ vertices and $4n + 1$ edges where
\begin{itemize}
    \item $V = \{ v_t, v_b, v_{1, 1}, v_{1, 2}, v_{2, 1}, v_{2, 2}, \cdots, v_{n, 1}, v_{n, 2} \}$ 
    \item $E = \{ e_{\infty} = (v_t, v_b) \} \cup \{ 1_i = (v_t, v_{i, 2}) \; \vert \; i \in [n] \} \cup \{ 2_i = (v_t, v_{i, 1}) \; \vert \; i \in [n] \} \\ \cup \{ 3_i = (v_{i, 1}, v_{i, 2}) \; \vert \; i \in [n] \} \cup \{ 4_i = (v_b, v_{i, 2}) \; \vert \; i \in [n] \}$
\end{itemize}
The edge $e_\infty$ is still called the \emph{infinity edge} and the set $\{ 1_i, 2_i, 3_i, 4_i \}$ of corresponding edges is also called a \emph{claw}. Additionally, the weight function $v: E \rightarrow \R_{\geq 0}$ is given to the set of edges such that
\begin{equation*}
    v(e_\infty) > v(4_1) > \cdots > v(4_n) > v(3_1) > \cdots > v(3_n) > v(2_1) > \cdots > v(2_n) > v(1_1) > \cdots > v(1_n)
\end{equation*}
where the weight of the infinity edge is significantly larger than the sum of the remaining weights.
\end{definition}

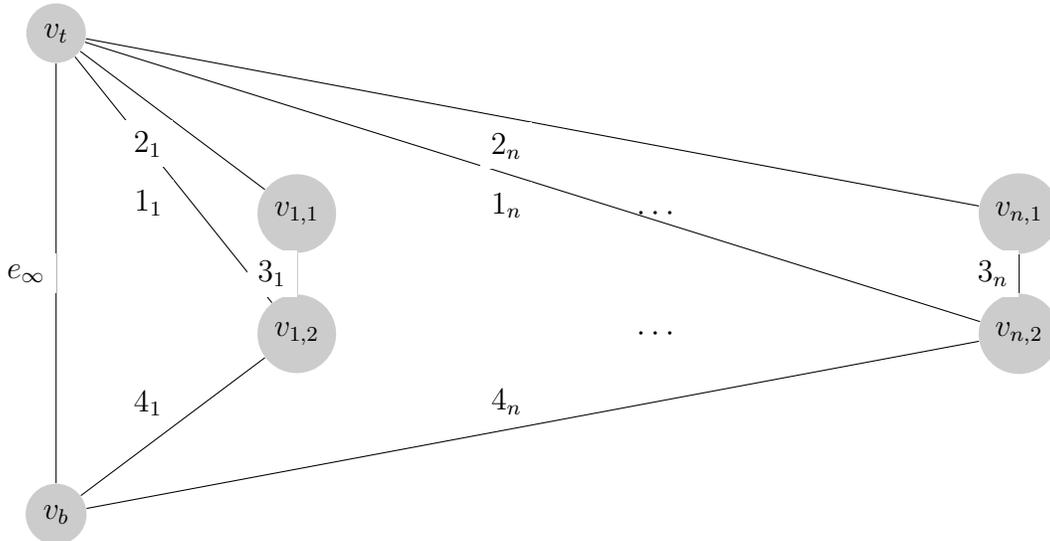
\begin{figure}[h]
\centering
    \begin{tikzpicture} [scale=.8,auto=left]
        \node (vt) at (0, 8) [circle, fill=black!20] {$v_t$};
        \node (vb) at (0, 0)  [circle, fill=black!20] {$v_b$};
        \node (v11) at (4, 5)  [circle, fill=black!20] {$v_{1, 1}$};
        \node (v12) at (4, 3)  [circle, fill=black!20] {$v_{1, 2}$};
        \node (vn1) at (16, 5)  [circle, fill=black!20] {$v_{n, 1}$};
        \node (vn2) at (16, 3)  [circle, fill=black!20] {$v_{n, 2}$};
        \node (skip) at (10, 5)  [circle, draw=white!20, fill=white!20] {$\cdots$};
        \node (skip) at (10, 3)  [circle, draw=white!20, fill=white!20] {$\cdots$};
        
        \draw (vb) -- (vt) node [midway, fill=white] {$e_\infty$};
        \draw (v12) -- (vt)  node [midway, fill=white] {$1_1$};
        \draw (v11) -- (vt)  node [midway, fill=white] {$2_1$};
        \draw (v12) -- (v11) node [midway, fill=white] {$3_1$};
        \draw (vb) -- (v12)  node [midway, fill=white] {$4_1$};
        \draw (vn2) -- (vt)  node [midway, fill=white] {$1_n$};
        \draw (vn1) -- (vt)  node [midway, fill=white] {$2_n$};
        \draw (vn2) -- (vn1) node [midway, fill=white] {$3_n$};
        \draw (vb) -- (vn2)  node [midway, fill=white] {$4_n$};
    \end{tikzpicture}
\caption{Modified Hat Graph as defined in \ref{def:ModifiedHatGraph}}
\end{figure}

\begin{lemma}
\label{lemma:VirtualModifiedHatGraphNotConstant1}
For some $i \in [n]$, if the following conditions hold, then the Virtual Algorithm accepts $1_i$ and $4_i$
\begin{enumerate}
    \item $2_i \in S$ and $1_i, 3_i, 4_i, e_\infty$ arrive after the sampling phase, in that specific order
    \item There exists $j < i$ such that $2_j, 3_j, 4_j \in S$
\end{enumerate}
\end{lemma}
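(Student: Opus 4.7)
Denote $t_1 := t(1_i)$, $t_2 := t(3_i)$, $t_3 := t(4_i)$. The plan is to verify, at each of $t_1$ and $t_3$, the three Virtual Algorithm acceptance conditions, tracing the max-weight basis using the greedy algorithm on the modified hat graph.

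I would first handle $1_i$ at $t_1$. The structural observation driving this step is that $v_{i,2}$ is incident to exactly $\{1_i, 3_i, 4_i\}$, and by condition 1 of the lemma none of $3_i, 4_i$ arrive before $t_1$; hence $v_{i,2}$ is an isolated vertex in $V_{t_1}$. Consequently, adding $1_i$ forms no cycle, so all three conditions of the Virtual Algorithm are satisfied trivially. As a bridging observation I would also show that $3_i$ is rejected at $t_2$: it completes the triangle $\{1_i, 2_i, 3_i\}$ and the only element kicked out of the max-weight basis is the lightest one, $1_i \notin S$, violating the sample condition; this matters because it tells us $3_i \notin A_{t_3}$.

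For $4_i$ at $t_3$, condition 2 follows by an isolation argument at the heavier level: the only edges heavier than $4_i$ are $e_\infty$ (not yet arrived) and $\{4_1, \ldots, 4_{i-1}\}$, which form a star around $v_b$ reaching only $v_{l,2}$ for $l < i$, so $v_b$ and $v_{i,2}$ are disconnected through the heavier portion of the greedy and $4_i$ is added. For condition 3, I would compare the greedy traces on $V_{t_3}$ and $V_{t_3} \cup \{4_i\}$, focusing on claws $i$ and $j$: without $4_i$ the greedy picks up $\{2_i, 3_i\}$ from claw $i$; with $4_i$, the insertion of $4_i$ together with $3_i$ and the sampled path $\{2_j, 3_j, 4_j\}$ already connects $v_t$ to $v_{i,1}$ through edges heavier than $2_i$, forcing $2_i$ out of the basis. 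By Lemma \ref{lemma:MWBKickOutOne} the difference is a single element, namely $2_i$, which is in $S$ by hypothesis.

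The hard part will be condition 1: showing that $A_{t_3} \cup \{4_i\}$ is independent. Since $v_{i,2}$'s only $A_{t_3}$-incident edge is $1_i$, failure of independence would require $v_b$ to be reachable from $v_t$ in $A_{t_3}$; since $e_\infty$ has not arrived and $4_j \in S$ is not accepted, such a reachability would need some $4_k$ with $k \neq i, j$ in $A_{t_3}$. I would rule this out by a claw-blocker argument in the spirit of Lemma \ref{lemma:VirtualHatGraphConstant2}, splitting on which of $\{1_k, 2_k, 3_k\}$ lie in $S$ or arrive before $4_k$ and verifying case by case that the element the Virtual Algorithm would kick out when examining $4_k$ is not a sample, forcing $4_k$ to be rejected. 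The extra middle vertex $v_{k,1}$ of the modified graph, which gives a second $v_t$-$v_b$ route through each claw, is what makes this case-splitting noticeably more delicate than in the ordinary hat graph, and it is the part of the write-up I would expect to consume most of the space.
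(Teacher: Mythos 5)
Your analysis of $1_i$, the bridging remark that $3_i$ is rejected, and the verification of conditions 2 and 3 of the Virtual Algorithm for $4_i$ all track the paper's own argument (which works entirely in terms of $MWB(V_t)$). You are right that the paper's proof never explicitly checks condition 1, namely that $A_{t(4_i)} \cup \{4_i\}$ is independent, and you are also right that this is a genuine gap: for the Virtual Algorithm the invariant $A_t \subset MWB(V_t)$ can fail, because a rejected arrival can still evict an earlier-accepted (hence non-sample) edge from the max-weight basis.

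But the repair you sketch does not go through. You claim a case analysis shows the kick-out for any $4_k$ ($k \ne i,j$) is never a sample, so every such $4_k$ is rejected. Consider a claw $k$ with $j < k$, $k \ne i$, where $2_k \in S$ and $1_k, 3_k, 4_k$ arrive after the sampling phase and before $t(4_i)$, in that order. Then $1_k$ is accepted (no kick-out). When $3_k$ arrives it is rejected but evicts $1_k$ from the max-weight basis. When $4_k$ arrives, the unique cycle created in $MWB(V_{t(4_k)}) \cup \{4_k\}$ is $\{4_k, 3_k, 2_k\}$ together with the $\{2,3,4\}$-edges of the leftmost complete claw, and its lightest element is $2_k \in S$, so all three acceptance conditions hold and $4_k$ is accepted. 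Now $\{1_k, 4_k\} \subset A$ is a $v_t$--$v_b$ path, so in fact $A_{t(4_i)} \cup \{4_i\}$ is dependent and $4_i$ is \emph{rejected}: the lemma's stated conclusion is false under this arrival pattern, so no proof of it as written can succeed. (The theorem it feeds into survives, since $A$ already spanning $v_t$ and $v_b$ also blocks $e_\infty$; the lemma should be weakened to ``either $1_i$ and $4_i$ are both accepted, or $A_{t(e_\infty)}$ already contains a $v_t$--$v_b$ path.'')
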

\begin{proof}
Let $t_1, t_3, t_4$ respectively be the times that $1_i, 3_i, 4_i$ arrive. First, it is clear that $1_i \in MWB(V_{t_1} \cup \{ 1_i \}) $ and that adding it will not kick out any element from $MWB(V_{t_1})$. Therefore, the Virtual Algorithm will accept $1_i$. Next, notice that $1_i, 2_i$ are never kicked out of the max-weight basis until $3_i$ arrives. And when it does, $1_i, 2_i, 3_i$ form a cycle, so at time $t_3$, $1_i$ will have to be kicked out of $MWB(V_{t_3})$. Since this element is not a sample, $3_i$ is rejected but is added to the max-weight basis, replacing $1_i$, and $3_i$ will not be kicked out until $4_i$ arrives. Finally, when $4_i$ arrives, $MWB(V_{t_4})$ contains exactly one path from $v_t$ to $v_b$ on the left hand side of the claw $i$. But this path cannot be of the type $\{ 1_k, 4_k \}$ because we already know that $\{ 2_j, 3_j, 4_j \} \in S \subset V_{t_4}$. Therefore, the path has to be of the type $\{ 2_k, 3_k, 4_k \}$ for some $k < i$. Then $2_k, 3_k, 4_k, 2_i, 3_i, 4_i$ form a cycle, and $4_i$ kicks out $2_i$ from the max-weight basis. Since $2_i$ was from the sample, the Virtual Algorithm will accept $4_i$.
\end{proof}

The conditions given in the lemma above are not tight at all. However, they are sufficient for our analysis.
\begin{lemma}
\label{lemma:VirtualModifiedHatGraphNotConstant2}
There exists $j \in [n]$ such that $j \leq \lfloor \frac{n}{2} \rfloor$ and $2_j, 3_j, 4_j \in S$, with probability $p_n = 1 - (1 - p^3)^{\lfloor n/2 \rfloor}$
\end{lemma}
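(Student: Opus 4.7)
The plan is to express the desired event as a union of independent single-claw events and then apply standard independence calculations. Specifically, for each $j \in \{1, 2, \ldots, \lfloor n/2 \rfloor\}$, let $E_j$ be the event that all three edges $2_j, 3_j, 4_j$ arrive during the sampling phase, i.e., that $2_j, 3_j, 4_j \in S$. The desired event is exactly $\bigcup_{j=1}^{\lfloor n/2 \rfloor} E_j$, so by passing to complements it suffices to compute $\Pr[\bigcap_{j=1}^{\lfloor n/2 \rfloor} \overline{E_j}]$.

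First, I would observe that each edge $u$ lies in $S = V_p$ independently with probability exactly $p$, since the arrival times $t(u)$ are drawn independently and uniformly from $[0,1]$ and $\Pr[t(u) < p] = p$. For a fixed $j$, the three events $\{2_j \in S\}, \{3_j \in S\}, \{4_j \in S\}$ are independent, so $\Pr[E_j] = p^3$, and hence $\Pr[\overline{E_j}] = 1 - p^3$.

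Next, I would argue that the events $E_1, E_2, \ldots, E_{\lfloor n/2 \rfloor}$ are mutually independent. This is the only step with any real content, and it is immediate: the edges $\{2_j, 3_j, 4_j\}_{j=1}^{\lfloor n/2 \rfloor}$ are pairwise disjoint subsets of $E$, and each $E_j$ depends only on the arrival times of the edges in its own claw. Since the collection $\{t(u) : u \in E\}$ consists of mutually independent random variables, functions of disjoint subcollections are independent, and therefore the $E_j$ (and hence the $\overline{E_j}$) are jointly independent.

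Combining these observations yields
\begin{equation*}
\Pr\!\left[\bigcup_{j=1}^{\lfloor n/2 \rfloor} E_j\right] = 1 - \prod_{j=1}^{\lfloor n/2 \rfloor} \Pr[\overline{E_j}] = 1 - (1 - p^3)^{\lfloor n/2 \rfloor} = p_n,
\end{equation*}
which is exactly the claim. There is no real obstacle here; the entire argument is a bookkeeping exercise, and the only subtlety worth stating explicitly is the disjointness of the edge sets $\{2_j, 3_j, 4_j\}$ across $j$, which is what buys independence.
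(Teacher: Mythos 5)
Your proof is correct and matches the paper's approach, which is simply to compute $\Pr[E_j]=p^3$ for each $j$ and then use independence across claws to pass to the complement; the paper states this in two terse sentences while you make the independence step explicit by noting the disjointness of the edge sets $\{2_j,3_j,4_j\}$, which is exactly the right justification.
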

\begin{proof}
For each $j$, the event $2_j, 3_j, 4_j \in S$ happens with probability $p^3$. The probability that one of these events happening is $1 - (1 - p^3)^{\lfloor n/2 \rfloor}$
\end{proof}

\begin{lemma}
\label{lemma:VirtualModifiedHatGraphNotConstant3}
Conditioned on time $t > p$ that the edge $e_\infty$ arrives, there exists $i \in [n]$ such that $i > \lfloor \frac{n}{2} \rfloor$ and $2_i \in S$ and $1_i, 3_i, 4_i, e_\infty$ arrive after the sampling phase, in that specific order, with probability $q_{n, t} = 1 - \left(1 - \left( \frac{p(t - p)^3}{6} \right) \right)^{\lfloor n/2 \rfloor}$
\end{lemma}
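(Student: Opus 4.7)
The plan is to compute the probability of the stated event for a single claw index $i$ and then combine across claws using independence. I would fix some $i$ with $i > \lfloor n/2 \rfloor$ and condition on $t(e_\infty) = t$. Because the arrival times of the $4n + 1$ edges are drawn independently and uniformly from $[0, 1]$, conditioning on the value of $t(e_\infty)$ leaves the remaining arrival times mutually independent and uniform on $[0, 1]$, so I can freely compute joint probabilities of events involving the other edges.

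For this fixed $i$, the required event decomposes into three independent sub-events: (i) $t(2_i) < p$, which occurs with probability $p$; (ii) each of $t(1_i), t(3_i), t(4_i)$ lies in $(p, t)$, which occurs with probability $(t - p)^3$ since the three times are independent; and (iii) conditional on (ii), the three times appear in the specific order $t(1_i) < t(3_i) < t(4_i)$, which by symmetry among the $3! = 6$ relative orderings has probability $1/6$. Multiplying these factors yields $p(t - p)^3 / 6$ for each fixed $i$.

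Finally I would combine across indices. The events for distinct indices $i$ involve disjoint sets of edges, so by independence of arrival times they are mutually independent. Since there are $\lceil n/2 \rceil \geq \lfloor n/2 \rfloor$ indices $i$ with $i > \lfloor n/2 \rfloor$, the probability that the event fails for every such $i$ is at most $(1 - p(t-p)^3/6)^{\lfloor n/2 \rfloor}$, and taking the complement gives the claimed lower bound $q_{n,t}$. The argument is mostly routine bookkeeping; the only mild subtlety is verifying that conditioning on $t(e_\infty) = t$ preserves the independence and uniformity of the other arrival times, which is immediate from the product structure of the joint distribution of arrival times.
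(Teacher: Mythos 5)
Your proof is correct and follows essentially the same route as the paper: compute $p \cdot (t-p)^3 \cdot \tfrac{1}{6}$ for a single claw and combine across claws using independence of disjoint edge-sets. You spell out slightly more carefully than the paper why conditioning on $t(e_\infty)$ leaves the other arrival times independent, and you correctly note that $\lfloor n/2\rfloor$ in the exponent yields a lower bound (there are actually $\lceil n/2\rceil$ eligible indices), but the underlying argument is the same.
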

\begin{proof}
The probability that $2_i \in S$ is $p$. The probability that each of $1_i, 3_i, 4_i$ arrives after the sampling phase but before $e_\infty$ is $t - p$. The probability that $1_i, 3_i, 4_i$ are permuted in that specific order is $\frac{1}{6}$. Therefore, the probability that the condition is satisfied for one specific $i$ is $\frac{p(t - p)^3}{6}$. The probability that the condition is satisfied for at least one $i$ is $1 - \left(1 - \left( \frac{p(t - p)^3}{6} \right) \right)^{\lfloor n/2 \rfloor}$
\end{proof}

The following lemma is more of a quick observation.
\begin{lemma}
$\underset{n \rightarrow \infty}{\lim} p_n = \underset{n \rightarrow \infty}{\lim} q_{n, t} = 1$ for any $t \in (p, 1]$
\end{lemma}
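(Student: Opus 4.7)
Both $p_n$ and $q_{n,t}$ are of the form $1 - (1 - c)^{\lfloor n/2 \rfloor}$ for some constant $c \in (0,1)$, so the plan is simply to verify in each case that the base lies strictly between $0$ and $1$ and then appeal to the fact that $\lfloor n/2 \rfloor \to \infty$.

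For $p_n$, recall that $p \in (0,1)$ is the fixed sampling probability, so $c = p^3 \in (0,1)$, hence $1 - p^3 \in (0,1)$. Since $(1 - p^3)^{\lfloor n/2 \rfloor} \to 0$ as $n \to \infty$, we conclude $p_n \to 1$.

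For $q_{n,t}$, fix $t \in (p, 1]$. Then $t - p \in (0, 1-p] \subset (0,1)$, so $(t-p)^3 \in (0,1)$, and therefore $c = \tfrac{p(t-p)^3}{6} \in (0, \tfrac{1}{6}) \subset (0,1)$. Again $(1 - c)^{\lfloor n/2 \rfloor} \to 0$, so $q_{n,t} \to 1$. The only minor point worth noting is that the choice of $t$ matters only to the extent that $t > p$ (strict inequality), which is exactly what is assumed; without strict inequality the base would equal $1$ and the limit would fail.
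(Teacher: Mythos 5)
Your proof is correct and is the standard argument the paper leaves implicit --- the paper explicitly labels this lemma a ``quick observation'' and gives no proof, so you are simply filling in the obvious verification that the bases $1-p^3$ and $1-\tfrac{p(t-p)^3}{6}$ lie strictly between $0$ and $1$ and that $\lfloor n/2\rfloor \to \infty$. The only thing worth flagging is that you (correctly and necessarily) use $p\in(0,1)$, whereas the paper nominally allows $p\in[0,1]$; the boundary cases are degenerate and irrelevant to the downstream theorem, but it is good that you noticed the strictness of $t>p$ matters, and the same remark applies to $p>0$.
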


We are now ready to prove the main result of this part.

\begin{theorem}
The Virtual Algorithm does not obtain a constant competitive ratio on the Modified Hat Graph
\end{theorem}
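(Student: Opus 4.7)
The plan is to combine the three preceding lemmas to show that $\Pr[e_\infty \in A] \to 0$ as $n \to \infty$, and then invoke the dominance of $v(e_\infty)$ over the remaining weights to conclude that no constant utility competitive ratio is achievable on the Modified Hat Graph.

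First I would observe that Lemma \ref{lemma:VirtualModifiedHatGraphNotConstant1} provides the crucial obstruction: whenever its two hypotheses are simultaneously met for some pair $(i, j)$ with $j < i$, the Virtual Algorithm accepts both $1_i$ and $4_i$. Together these two edges form a path from $v_t$ to $v_b$ through $v_{i, 2}$, so $\{ e_\infty, 1_i, 4_i \}$ is a cycle in $G$. Hence $A_{t(e_\infty)} \cup \{ e_\infty \}$ is dependent in $\M$, and the Virtual Algorithm is forced to reject $e_\infty$.

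Next I would lower-bound the probability that those hypotheses are met. Let $E_1$ be the event that some $i > \lfloor n/2 \rfloor$ witnesses the first hypothesis, and $E_2$ the event that some $j \leq \lfloor n/2 \rfloor$ witnesses the second. These are defined on edges in disjoint sets of claws, and $E_2$ makes no reference to $t(e_\infty)$; hence, after conditioning on $t(e_\infty) = t \in (p, 1]$, the two events become independent. Using Lemmas \ref{lemma:VirtualModifiedHatGraphNotConstant2} and \ref{lemma:VirtualModifiedHatGraphNotConstant3},
\[
    \Pr[e_\infty \notin A] \geq p + \int_p^1 p_n \, q_{n, t} \, dt,
\]
where the first term accounts for the disjoint event $e_\infty \in S$ (since $E_1$ already demands $t(e_\infty) > p$).

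By the preceding observation that $p_n q_{n, t} \to 1$ pointwise on $(p, 1]$, dominated convergence (with the constant dominator $1$) forces the integral to tend to $1 - p$, so $\Pr[e_\infty \notin A] \to 1$. Since $v(e_\infty)$ may be taken arbitrarily large relative to $\sum_{e \neq e_\infty} v(e)$, the ratio $\mathbb{E}[v(A)] / v(MWB(\U))$ tends to $0$, finishing the proof. The main obstacle will be spelling out the conditional independence of $E_1$ and $E_2$ cleanly, since both events are defined via arrival-time comparisons and sampling-status queries; however, because the arrival times are drawn mutually independently across edges and the two events touch disjoint claws, this amounts to a routine factorization of the product measure rather than a genuine difficulty.
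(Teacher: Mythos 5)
Your proposal is correct and follows essentially the same route as the paper: combine Lemmas \ref{lemma:VirtualModifiedHatGraphNotConstant1}--\ref{lemma:VirtualModifiedHatGraphNotConstant3}, integrate $p_n q_{n,t}$ over $t \in (p,1]$, pass to the limit to show $\Pr[e_\infty \text{ rejected}] \to 1$, and use the dominance of $v(e_\infty)$ to kill the competitive ratio. You are in fact slightly more careful than the paper at two points: you write the probability bound as an inequality rather than the paper's (technically imprecise) chain of equalities, and you make explicit the conditional independence of the two claw events and the cycle argument that forces rejection; you also use dominated convergence where the paper uses monotone convergence, which is an interchangeable technical detail.
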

\begin{proof}
Conditioned on time $t > p$ that the edge $e_\infty$ arrives, if the conditions of Lemma \ref{lemma:VirtualModifiedHatGraphNotConstant2} and the Lemma \ref{lemma:VirtualModifiedHatGraphNotConstant3} are both satisfied, then the two conditions of Lemma \ref{lemma:VirtualModifiedHatGraphNotConstant1} are satisfied. The probability of such event is at least $p_n q_{n, t}$. And when such an even happens, $e_\infty$ cannot be accepted. Also, if $t < p$, then $e_\infty \in S$ and cannot be accepted either.
\begin{align*}
\underset{n \rightarrow \infty}{\lim} \Pr[e_\infty \text{ rejected}] 
&= \underset{n \rightarrow \infty}{\lim} \left( \int_0^p dt + \int_p^1 p_n q_{n, t} dt \right) \\ 
&= \underset{n \rightarrow \infty}{\lim} \left( \int_0^p dt + p_n \int_p^1 q_{n, t} dt \right) \\
&= p + \underset{n \rightarrow \infty}{\lim} p_n \cdot \underset{n \rightarrow \infty}{\lim} \int_p^1 q_{n, t} dt \\
&= p + \underset{n \rightarrow \infty}{\lim} p_n \cdot \int_p^1 \underset{n \rightarrow \infty}{\lim}  q_{n, t} dt \\
&= p + 1 \cdot (1 - p) = 1
\end{align*}
Since $0 \leq q_{n, t} \leq q_{n+1, t}$ for any $n, t$, the Monotone Convergence Theorem guarantees that the limit and the integral interchange between the third and the fourth lines of the equation above.
\end{proof}

\subsection{More About Forbidden Sets}
In Section \ref{section:VirtualHatForbiddenSets}, we saw that the Virtual Algorithm has Forbidden Sets of size 2 for the Hat Graph. In general, \cite{STV2021} proved that there is an algorithm that has Forbidden Sets of size 2, thus obtaining a probability competitive ratio of $\frac{1}{4}$, for \emph{any} graphic matroid. A natural question is whether there is an algorithm that has Forbidden Sets of size 1 for any graphic matroid because that would mean that it obtains a probability competitive ratio of $\frac{1}{e}$, the most optimal ratio it can achieve.

In this part, we formally define the \emph{Strong} Forbidden Sets property to be the property that $\F(X, Y, u) = \F(X', Y, u)$ for any $X, X'$. This was the additional property that the Forbidden Sets for the Virtual Algorithm had, and this property allows us to have a better intuitive understanding of the Forbidden Sets argument. We prove that there cannot be an algorithm with Strong Forbidden Sets of size 1 on a general graphic matroid.

\begin{definition}
	An algorithm $\A$ for an ordinal MSP has \emph{Strong Forbidden Sets} of size $k$, if for every tuple $(Y, u)$ with $Y \subset \U, u \in MWB(Y)$, one can define a set $\F(Y, u) \subset Y \setminus \{ u \}$ of at most $k$ \emph{forbidden elements} of $Y$ such that the following conditions holds. 
	\begin{itemize}
		\item Let $u$ be an element that arrives after the sampling phase. If $u \in MWB(V_{t(u)} \cup \{ u \})$ and for every $u' \in V_{t(u)} \setminus S$, we have $u' \not \in \F( V_{t(u)} \cup \{ u \}, u)$, then $u \in A$
	\end{itemize}
\end{definition}

We present a very short lemma.
\begin{lemma}
	\label{lemma:ForbiddenFirstAfterSample}
	If an algorithm $\A$ for an ordinal MSP has the Forbidden Sets property, then it always accepts the first element after the sampling phase
\end{lemma}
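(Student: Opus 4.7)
My plan is to apply the Forbidden Sets property directly in the limiting case where no non-sampled elements precede $u$, which makes the universal hypothesis in the definition vacuous.

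Let $u$ denote the first element that arrives after the sampling phase. Then every element preceding $u$ arrived during the sampling phase, so $V_{t(u)} = S$, and hence $V_{t(u)} \setminus S = \emptyset$. Substituting this into the Forbidden Sets definition, the implication's hypothesis ``for every $u' \in V_{t(u)} \setminus S$, $u' \not\in \F(V_{t(u')} \cup \{u'\}, V_{t(u)} \cup \{u\}, u)$'' quantifies over the empty set and is therefore vacuously satisfied, independently of how $\F$ has been chosen. Hence the implication collapses to a single clause: whenever $u \in MWB(V_{t(u)} \cup \{u\}) = MWB(S \cup \{u\})$, the Forbidden Sets property immediately forces $u \in A$.

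The main -- and essentially only -- obstacle is interpretive: the phrase ``always accepts'' in the lemma should be read modulo the standing MWB-membership clause baked into the Forbidden Sets definition, since no such property could ever compel an algorithm to accept an element that is spanned by heavier sampled elements (the Virtual Algorithm, for instance, rejects such a first post-sample element, yet has the Forbidden Sets property as shown in Proposition \ref{prop:VirtualAlgoForbiddenSets}). Under the natural reading that matches the intended downstream use, the proof is a one-line unpacking of the definition and requires no further matroid machinery from the preliminaries.
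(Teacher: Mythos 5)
Your proof is correct and uses exactly the same argument as the paper: $V_{t(u)} \setminus S = \emptyset$, so the universally quantified forbidden-element condition is vacuous and the Forbidden Sets property immediately forces acceptance. Your caveat about the standing hypothesis $u \in MWB(V_{t(u)} \cup \{u\})$ is an apt observation that the paper's own one-line proof passes over silently; the lemma is stated loosely, and your qualified reading is the one that matches its downstream use in Theorem~\ref{thm:StrongForbiddenSetsSize1}.
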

\begin{proof}
	Let $u$ be the first element after the sampling phase. Then $V_{t(u)} \setminus S$ is empty, and the condition for Forbidden Sets becomes vacuously true, so $\A$ needs to accept $u$
\end{proof}

\begin{theorem}
	\label{thm:StrongForbiddenSetsSize1}
	There is no algorithm $\A$ for an ordinal MSP with Strong Forbidden Sets of size 1 on all graphic matroids
\end{theorem}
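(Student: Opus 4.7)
The plan is to exhibit a single graphic matroid on which no algorithm can have Strong Forbidden Sets of size $1$. Let $\M = (\U, \I)$ be the matroid of three parallel edges $\U = \{e_1, e_2, e_3\}$ between a single pair of vertices, with $v(e_1) > v(e_2) > v(e_3)$. Any two parallel edges form a circuit, so $\I$ consists only of $\emptyset$ and the three singletons; in particular $MWB(\U) = \{e_1\}$. Suppose for contradiction that $\A$ has Strong Forbidden Sets of size $1$ certified by some $\F$, and examine $\F(\U, e_1)$, which is a subset of $\{e_2, e_3\}$ of size at most one. This gives exactly three cases.

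If $\F(\U, e_1) = \{e_i\}$ for some $i \in \{2, 3\}$, write $j$ for the other element of $\{2, 3\}$ and consider the arrival realization with $t(e_i) < p < t(e_j) < t(e_1)$. Then $S = \{e_i\}$ and $e_j$ is the first element after the sampling phase, so by Lemma~\ref{lemma:ForbiddenFirstAfterSample} we have $A_{t(e_1)} = \{e_j\}$. When $e_1$ arrives, $V_{t(e_1)} \setminus S = \{e_j\}$ is disjoint from $\F(\U, e_1) = \{e_i\}$, and $e_1 \in MWB(\U)$; the Strong Forbidden Sets condition therefore forces $\A$ to accept $e_1$. But $\{e_1, e_j\}$ is a circuit, contradicting correctness.

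For the remaining case $\F(\U, e_1) = \emptyset$, take $S = \emptyset$ with post-sampling order $e_3, e_2, e_1$. The lemma forces $\A$ to accept $e_3$, so $A = \{e_3\}$. When $e_2$ arrives, if $\F(\{e_2, e_3\}, e_2) = \emptyset$ then the condition already forces $\A$ to accept $e_2$, yielding the dependent pair $\{e_2, e_3\}$ and contradicting correctness right away; otherwise $\F(\{e_2, e_3\}, e_2) = \{e_3\}$, and correctness forces $\A$ to reject $e_2$, leaving $A = \{e_3\}$. When $e_1$ now arrives, $V_{t(e_1)} \setminus S = \{e_2, e_3\}$ is trivially disjoint from $\F(\U, e_1) = \emptyset$, so the condition forces $\A$ to accept $e_1$, producing the dependent output $\{e_1, e_3\}$.

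The essential obstruction, and the reason parallel edges work where a triangle does not, is a clean pigeonhole: $e_1$ lies in a circuit with every other element of $\U$, so whenever $e_1$ arrives the first-after-sampling element is already in a circuit with it, making acceptance infeasible. A size-$1$ forbidden set can shield only one of the two potential companions, and the adversary is free to choose the sample to expose the unshielded one. In a triangle this argument collapses because every pair of edges is independent, but three parallel edges form a perfectly legitimate graphic matroid, and the three cases above exhaust all possible $\F(\U, e_1)$ of size at most one. Verifying that the continuous-time arrival realizations above occur with positive probability is routine, since each corresponds to a positive-measure configuration of $(t(e_1), t(e_2), t(e_3))$ relative to the fixed threshold $p$.
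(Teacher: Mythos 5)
Your argument breaks down in the case $\F(\U,e_1)=\{e_2\}$, and the gap is exactly the hidden hypothesis in Lemma~\ref{lemma:ForbiddenFirstAfterSample}. The Forbidden Sets condition only forces acceptance of $u$ when $u \in MWB(V_{t(u)}\cup\{u\})$; making $V_{t(u)}\setminus S$ empty kills the second conjunct but does nothing about the first. (The paper's one-line proof of the lemma glosses over this, but in the paper's own use of the lemma the first post-sample element is never parallel to the lone sampled element, so the MWB condition is satisfied automatically.) In your scenario $t(e_2)<p<t(e_3)<t(e_1)$ the first post-sample element is $e_3$, and since $e_2$ and $e_3$ are parallel with $v(e_2)>v(e_3)$, we have $MWB(\{e_2,e_3\})=\{e_2\}$, so $e_3\notin MWB(V_{t(e_3)}\cup\{e_3\})$. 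Nothing forces $e_3$ to be accepted; an algorithm may reject it, leaving $A_{t(e_1)}=\emptyset$, and then $e_1$ is accepted with a perfectly independent output $\{e_1\}$. No contradiction.

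This is not a patchable omission: the triple-parallel-edge matroid genuinely admits an algorithm with Strong Forbidden Sets of size 1, so it cannot serve as a counterexample. Take $\F(\U,e_1)=\F(\{e_1,e_2\},e_1)=\{e_2\}$, $\F(\{e_1,e_3\},e_1)=\{e_3\}$, $\F(\{e_2,e_3\},e_2)=\{e_3\}$, all singleton $\F$'s empty, and the algorithm that accepts the first post-sample element satisfying both the MWB test and the $\F$-disjointness test. One can check by exhausting the (finitely many) sample/arrival configurations that the rule never demands a second acceptance, so this algorithm is correct and certified by $\F$ of size 1. The root of the problem is that in a rank-1 matroid, a single sampled heavy element $e_2$ can simultaneously act as the forbidden blocker for $e_1$ and silently dominate $e_3$ out of the MWB, so the ``pigeonhole'' you describe never closes. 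The paper's doubled-triangle instance is chosen precisely so that the witness $e'$ used to show $\F(Y,e_{i,2})$ must equal $\{e_{i,1}\}$ always lies in a different parallel class from $e_{i,1}$; then $e_{i,1}\in MWB(\{e',e_{i,1}\})$ holds for free, the lemma applies legitimately, and the three forced acceptances $e_{1,2},e_{2,2},e_{3,2}$ close a cycle. You would need an instance with at least that much structure --- one where each forbidden element is forced by correctness alone, and the forced forbidden elements can all be sampled simultaneously without masking the MWB condition.
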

\begin{proof}
	Consider the following undirected graph $G$: $V = \{v_0, v_1, v_2\}$, $E = \{ e_{1, 1}, e_{1, 2} = (v_1, v_2), e_{2, 1}, e_{2, 2} = (v_2, v_3), e_{3, 1}, e_{3, 2} = (v_3, v_1) \}$ with the weights $v(e_{i, j}) = i + 3(j - 1)$ for each $i, j$. For each $i$, we will show that $\F(Y, e_{i, 2})$ has to be $e_{i, 1}$ for any $Y \ni e_{i, 1}$. Assume otherwise, and let $\F(Y, e_{i, 2}) = e'$ for some $Y$ and $e' \not = e_{i, 1}$. Consider the case where $e'$ is sampled, and $e_{i, 1}$ is the first to arrive after the sampling phase, followed by the remaining elements of $Y$, with $e_{i, 2}$ being the last element to arrive from $Y$. Then by Lemma \ref{lemma:ForbiddenFirstAfterSample}, $e_{i, 1}$ is accepted. Also, since $e' \in S$, $e_{i, 2}$ is accepted. This contradicts the correctness property of $\A$. This shows that $\F(Y, e_{i, 2})$ has to be $e_{i, 1}$ for any $Y \ni e_{i, 1}$.
	
	Now consider the case where $e_{1, 1}, e_{2, 1}, e_{3, 1}$ are sampled and the other edges arrive in the following order: $e_{1, 2}, e_{2, 2}, e_{3, 2}$. Then by the Forbidden Sets rule, $\A$ has to accept the three edges, which form a cycle. This contradicts the correctness property of $\A$. 
\end{proof}

Theorem \ref{thm:StrongForbiddenSetsSize1} above shows that there is no algorithm with Strong Forbidden Sets of size 1 on a general graphic matroid. However, we conjecture that there is an algorithm with Strong Forbidden Sets of size 2 on all graphic matroids. On the other hand, we conjecture that there is no algorithm with Forbidden Sets of size 1 on a general graphic matroid. 

\begin{figure}[h]
	\centering
	\begin{tikzpicture} [scale=.8,auto=left]
	\node (v1) at (0, 0)  [circle, fill=black!20] {$v_1$};
	\node (v2) at (12, 0)  [circle, fill=black!20] {$v_2$};
	\node (v3) at (6, 9)  [circle, fill=black!20] {$v_3$};
	
	\draw (v2) to[bend left] node[align=center, fill=white]{$v(e_{1, 1}) = 1$} (v1);
	\draw (v3) to[bend left] node[align=center, fill=white]{$v(e_{2, 1}) = 2$} (v2);
	\draw (v1) to[bend left] node[align=center, fill=white]{$v(e_{3, 1}) = 3$} (v3);
	\draw (v2) to[bend right] node[align=center, fill=white]{$v(e_{1, 2}) = 4$} (v1);
	\draw (v3) to[bend right] node[align=center, fill=white]{$v(e_{2, 2}) = 5$} (v2);
	\draw (v1) to[bend right] node[align=center, fill=white]{$v(e_{3, 2}) = 6$} (v3);
	
	\end{tikzpicture}
	\caption{The undirected graph in the proof of Theorem \ref{thm:StrongForbiddenSetsSize1}}
\end{figure}
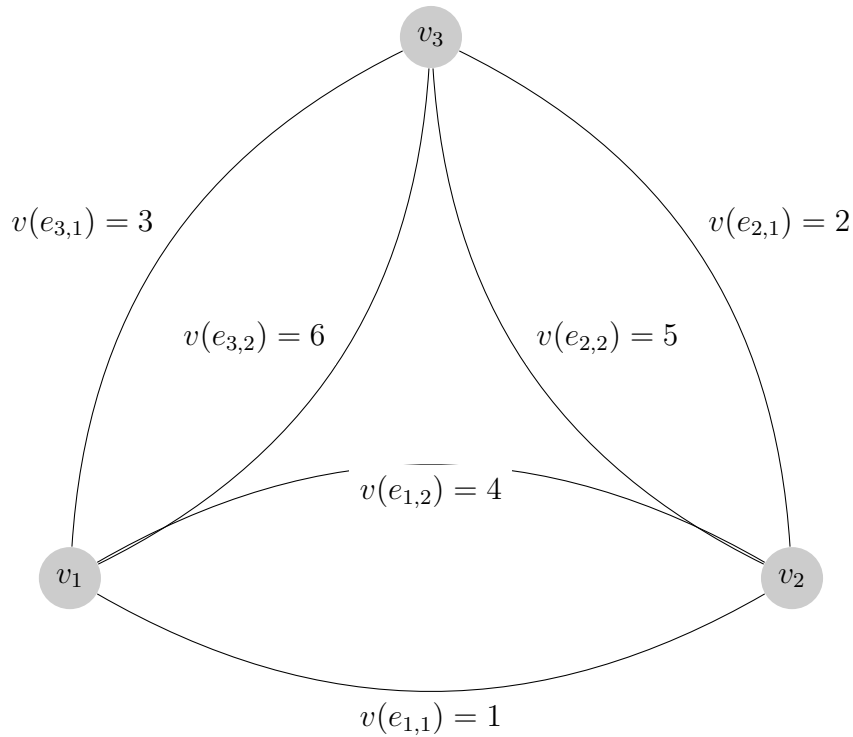
\section{Conclusion}
This paper studies the extensions of some simple algorithms from Classical and the Multiple Choice Secretary Problem to the MSP. We notice that subtle differences in the definitions of the algorithms create a non-trivial difference in the properties and the performance of the algorithms. In future studies, we hope to establish a framework to analyze a \emph{class} of algorithms and their collective properties.

\newpage
\bibliographystyle{abbrv}
\bibliography{bib}

\end{document}